\documentclass[]{llncs}

\usepackage{url}

\newcommand{\sort}[1]{\texttt{#1}}
\newcommand{\Eq}{\mathcal{E}}
\newcommand{\mc}[1]{\mathcal{#1}}

\usepackage{cite}
\usepackage{graphicx}
\usepackage{tikz}
\usetikzlibrary{automata, arrows.meta, positioning}

\usepackage{etoolbox,environ}
\newtoggle{short}
\togglefalse{short}

\usepackage{listings}
\lstset{basicstyle=\footnotesize\ttfamily,breaklines=true}
\usepackage{algorithm}
\usepackage[noend]{algpseudocode}

\usepackage{stmaryrd}
\usepackage{amssymb}
\usepackage{amsmath}
\usepackage{amsfonts}
\usepackage{mathtools}

\newcommand{\N}{\mathbb{N}}
\newcommand{\TES}[1]{\mathit{TES(#1)}}

\newcommand{\Po}{\mathcal{P}}
\newcommand{\Rp}{\mathbb{R}_+}
\newcommand{\Rel}{\mathcal{R}}
\newcommand{\pr}{\mathrm{pr}}

\newcommand{\id}{\mathit{id}}

\newcommand{\Linf}[1]{\mathcal{L}^\mathrm{inf}(#1)}
\newcommand{\Lfin}[1]{\mathcal{L}^\mathrm{fin}(#1)}
\newcommand{\Lfins}[1]{\mathcal{L}^\mathrm{fin*}(#1)}

\begin{document}
\begin{frontmatter}
\title{A Rewriting Framework for Interacting Cyber-Physical Agents}
\author{Benjamin Lion{1} \and Farhad Arbab\inst{1,2} \and Carolyn Talcott\inst{3}}
\institute{Leiden University, Leiden, The Netherlands \email{lion@cwi.nl}
           \and
            CWI, Amsterdam, The Netherlands \email{arbarb@cwi.nl}
            \and
            SRI International, CA, USA \email{talcott@gmail.com}}
            \maketitle

\begin{abstract}
    The analysis of cyber-physical systems (CPS) is challenging due to the large state space and the continuous changes occurring in their constituent parts. Design practices favor modularity to help reducing this complexity.
    In a previous work, we proposed a discrete semantic model for CPS that captures both cyber and physical aspects as streams of discrete observations, which ultimately form the behavior of a component. 
    This semantic model is denotational and compositional, where each composition operator algebraically models an interaction between a pair of components.

    In this paper, we propose a specification of components as rewrite systems. The specification is operational and executable, and we study conditions for its semantics as components to be compositional. 
    We demonstrate our framework by modeling a coordination of robots moving on a shared field.
    We show that our system of robots can be coordinated by a protocol in order to exhibit a desired emerging behavior.
    We use an implementation of our framework in Maude to give practical results.
\end{abstract}
\end{frontmatter}

\section{Introduction}
Cyber-physical systems are inherently concurrent.
From a cyber point of view, the timing of a decision to sense or act on its physical environment impacts the resulting outcome.
Moreover, several cyber entities may share the same physical environment, leading to race conditions.
From a physical point of view, the ordering of events is not always possible, as some events may be independent.
Moreover, two observers of the same physical phenomenon may order events differently. 
A concurrency protocol encapsulates the orderings of events acceptable to an application, and expressing protocols as separate, concrete modules (as in exogenous coordination~\cite{DBLP:conf/birthday/Arbab11}) helps to reduce the complexity in the design of cyber-physical systems.

More specifically, in this context, each part of a cyber-physical system (e.g., a car, a road, a battery, etc.) is represented as a module, and the system captures the concurrent and interactive execution of each module. We list the following benefits of such approach.
First, it makes concurrency explicit at the level of modules, amenable to exogenous coordination, which provides the opportunity to reason about concurrency protocols directly as first-class objects (e.g., how much a move of a robot consumes energy, can two robots move `simultaneously', etc.). 
Then, the representation of a system remains small. Often, a modular design allows composing constituent components statically to analyze the resulting system, or dynamically at runtime to keep the state space small for, e.g., simulating some runs.
Finally, a component comes with a notion of an interface, that specifies what is visible and what is hidden from other components.
This way, both discrete and continuous aspects of components have the same type of interface, containing the set of observations over time. 

In~\cite{DBLP:journals/corr/abs-2110-02214} we present a model of components that captures timed-event sequences (TESs) as instances of their behavior. An observation is a set of events with a unique time stamp. A component has an interface that defines which events are observable, and a behavior that denotes all possible sequences of its observations (i.e., a set of TESs). 
Our component model is equipped with a family of operators parametrized with an interaction signature. Thus, cyber-physical systems are defined modularly, where each product of two components models the interaction occurring between the two components. 
The strength, as well as practical limitation, of our semantic model is its abstraction: there is no fixed machine specification that generates the behavior of a component. 
We give in this paper an operational description of components as rewrite systems.

    Rewriting logic is a powerful framework to model concurrent systems~\cite{DBLP:journals/jlp/Meseguer12, DBLP:journals/tcs/Meseguer92}. 
    Moreover, implementations, such as Maude~\cite{DBLP:conf/maude/2007}, make system specifications both executable and analyzable. 
    Rewriting logic is suitable for specifying cyber-physical systems, as the underlying equational theory can represent both discrete and continuous changes. 
    We give an operational specification for components as rewriting systems, and show its compositionality under some assumptions.

    Finally, we apply our work to an example that considers two energy sensitive robots moving on a shared field. 
    Each of the two robots aims at reaching the other robot's initial position which, by symmetry, may eventually lead to a crossing situation. The crossing of the two robots is the source of a livelock behavior which can lead to failure (i.e., no energy left in the battery).
    We show how, an exogenous coordination imposed by a protocol can coordinate the moves of the two robots to avoid the livelock situation. We demonstrate the result using our implementation of our framework in Maude. 

We present the following contributions:
\begin{itemize}
    \item an operational specification of components as rewrite systems;
    \item some conditions for the rewrite system's semantics to be compositional;
    \item an incremental, runtime implementation of composition;
    \item illustration of how a composed Maude specification can be used to incrementally analyze a system design using a case study involving the behavior of two coordinated robot agents roaming on a field.
\end{itemize}

    The remainder of the paper is organized as follows.
    \iftoggle{short}{In Section~\ref{sec:ex}, we explain with an abstract example the feature of our operational framework for components.}
    In Section~\ref{sec:component}, we recall some results on the algebra of components defined in~\cite{DBLP:journals/corr/abs-2110-02214}, and give as examples the component version of a robot, a battery, and their product.
    In Section~\ref{sec:specification}, we give an operational specification, using rewriting logic, of a product of components as a system of agents. We show compositionality: the component of a system of agents is equal to the product of each agent component. 
    In Section~\ref{sec:implementation}, we detail the implementation in Maude of the operational specification given in Section~\ref{sec:specification} and analyse a system consisting of two robots, two private batteries, and a shared field. 
 
\newcommand{\Id}{\mathbb{I}}
\newcommand{\rread}{\mathit{read}}
\newcommand{\acts}{\mathit{acts}}
\newcommand{\move}{\mathit{move}}

\iftoggle{short}{
\section{A system of cyber-physical agents: an example}
\label{sec:ex}

This section illustrates our approach on an intuitive and simple cyber-physical system consisting of two robots roaming on a shared field.
A robot exhibits some cyber aspects, as it takes discrete actions based on its readings. Every robot interacts, as well, with a shared physical resource as it moves around. 
The field models the continuous response of each action (e.g., read or move) performed by a robot. 
A question that will motivate the paper is: given a strategy for both robots (i.e., sequence of moves based on their readings), will both robots, sharing the same physical resource, achieve their goals? If not, can the two robots, without changing their policy, be externally coordinated towards their goals?

In this paper, we specify components in a rewriting framework in order to simulate and analyze their behavior.
In this framework, an \emph{agent}, e.g., a robot or a field, specifies a component as a rewriting theory.
A \emph{system} is a set of agents that run concurrently.
The equational theory of an agent defines how the agent states are updated, and may exhibit both continuous and discrete transformations. The dynamics is captured by rewriting rules and an equational theory at the system level that describes how agents interact.
In our example, for instance, each move of a robot is synchronous with an effect on the field. 
Each agent therefore specifies how the action affects its state, and the system specifies which composite actions (i.e., set of simultaneous actions) may occur. 
We give hereafter an intuitive example that abstracts from the underlying algebra of each agent.

\paragraph{Agent} 
A robot and a field are two examples of an agent that specifies a component as a rewriting theory.
The dynamics of both agents is captured by a rewrite rule of the form:
\[
      (s,\emptyset)\Rightarrow (s',\acts)
\]
where $s$ and $s'$ are state terms, and $\acts$ is a set of actions that the field or the robot proposes as alternatives.
Given an action $a\in \acts$ from the set of possibilities, a function $\phi$ updates the state $s$ and returns a new state $\phi(s',a)$.
The equational theory that specifies $\phi$ may capture both discrete and continuous changes.
The robot and the field run concurrently in a system, where their actions may interact.

\begin{example}[Battery]
  A battery is characterized by a set of internal physical laws that describe the evolution of its energy profile over time under external stimulations. We consider three external stimuli for the battery as three events: a charge, a discharge, and a read event. Each of those events may change the profile of the battery, and we assume that in between two events, the battery energy follows some fixed internal laws.\\
  Formally, we model the energy profile of a battery as a function $f : \Rp \to [0,100\%]$ where $f(t) = 50\%$ means that the charge of the battery at time $t$ is of $50\%$. 
  In general, $f$ may be arbitrarily complex, and captures the response of event occurrences (e.g., charge, discharge, read) and passage of time coherently with the underlying laws (e.g., differential equation). For instance, a charge (or discharge) event at a time $t$ coincides with a change of slope in the function $f$ after time $t$ and before the next event occurrence.\\
  For simplicity, we consider a battery for which $f$ is piecewise linear in between any two events.
  The slope changes according to some internal laws at points where the battery is used for charge or discharge.\\
  In our model, a battery interacts with its environment only at discrete time points.
Therefore, we model the observables of a battery as a function $l : \N \to [0,100\%]$ that intuitively samples the state of the battery at some monotonically increasing and non-Zeno sequence of timestamp values.
We capture, in Definition~\ref{def:component}, the continuous profile of a battery as a component whose behavior contains all of such increasing and non-Zeno sampling sequences for all continuous functions $f$.
\end{example}
\begin{example}[Robot]
    A robot's state contains the previously read values of its sensors.
    Based on its state, a robot decides to move in some specific direction or read its sensors.\\
    Similarly to the battery, we assume that a robot acts periodically at some discrete points in time, such as the sequence $\move(E)$ (i.e., moving East) at time $0$, $\rread((x,y),l)$ (i.e., reading the position $(x,y)$ and the battery level $l$) at time $T$, $\move(W)$ (i.e., moving West) at time $3T$ while doing nothing at time $2T$, etc. The action may have as effect to change the robot's state: typically, the action $\rread((x,y),l)$ updates the state of the robot with the coordinate $(x,y)$ and the battery value $l$.
\end{example}

\paragraph{System}
A system is a set of agents together with a composability constraint $\kappa$ that restricts their updates. For instance, take a system that consists of a robot $\id$ and a field $F$. The concurrent execution of the two agents is given by the following system rewrite rule:
\[
    \{(s_\id,\acts_\id), (s_F, \acts_F)\} \Rightarrow_S  \{(\phi_\id(s_\id,a_\id),\emptyset), (\phi_F(s_F,a_F),\emptyset)\}
\]
where $a_\id \in \acts_\id$ and $a_F \in \acts_F$ are two actions related by $\kappa$.

Each agent is unaware of the other agent's decisions. The system rewrite $\Rightarrow_S$ filters actions that do not comply with the composability relation $\kappa$. As a result, each agent updates its state with the (possibly composite) action chosen at runtime, from the list of its submitted actions.
The framework therefore clearly separates the place where agent's and system's choices are handled, which is a source of runtime analysis.

Already, at this stage, we can ask the following query on the system: will robot $\id$ eventually reach the location $(x,y)$ on the field? Note that the agent alone cannot answer the query, as the answer depends on the characteristics of the field.
\begin{example}[Battery-Robot]
    \label{ex:battery-robot}
    Typically, a move of the robot \emph{synchronizes} with a change of state in the battery, and a read of the robot occurs at the same time as a sampling of the battery value.\\
    The system behavior therefore consists of sequences of simultaneous events occurring between the battery and the robot. 
    By composition, the battery exposes the subset of its behavior that conforms to the specific frequency of read and move actions of the robot. The openness of the battery therefore is reflected by its capacity to adapt to any observation frequency.
\end{example}

\paragraph{Coordination}
Consider now a system with three agents: two robots and a field. Each robot has its own objective (i.e., location to reach) and strategy (i.e., sequence of moves). Since both robots share the same physical field, some exclusion principals apply, e.g., no two robots can be at the same location on the field at the same time.
It is therefore possible that the system deadlocks if no actions are composable, or livelocks if the robots enter an infinite sequence of repeated moves.

We add a protocol agent to the system, which imposes some coordination constraints on the actions performed by robots $\id_1$ and $\id_2$.
Typically, a protocol coordinates robots by forcing them to do some specific actions.
As a result, given a system configuration $\{(s_{\id_1},\acts_{\id_1}), (s_{\id_2},\acts_{\id_2}), (s_F,\acts_F), (s_P, \acts_P)\}$ the run of robots $\id_1$ and $\id_2$ has to agree with the observations of the protocol, and the sequence of actions for each robot will therefore be conform to a permissible sequence under the protocol.

In the case where the two robots enter a livelock and eventually run out of energy, we show in Section~\ref{sec:implementation} the possibility of using a protocol to remove such behavior. 
\begin{example}[Safety property]
  A safety property is typically a set of traces for which nothing bad happens.
  In our framework, we consider only observable behaviors, and a safety property therefore declares that nothing bad \emph{is observable}. However, it is not sufficient for a system to satisfy a safety property to conclude that it is safe: an observation that would make a sequence violate the safety property may be absent, not because it did not actually happen, but merely because the system missed to detect it.\\
  For example, consider a product of a battery component and a robot with a sampling period $T$, as introduced in Example~\ref{ex:battery-robot}. 
  Consider the safety property: \emph{the battery energy is between the energy thresholds $e_1$ and $e_2$}. 
  The resulting system may exhibit observations with energy readings between the two thresholds only, and therefore satisfy the property. However, had the robot used a smaller sampling period $T'=T/2$, which adds a reading observation of its battery between every two observations, we may have been able to detect that the system is not safe because it produces sequences at this finer granularity sampling rate that  violate the safety property.
  We show how to algebraically capture the safety of a system constituted of a battery-robot. 
\end{example}
}
\section{Semantic model: algebra of components}
\label{sec:component}
The design of complex systems becomes simpler if such systems can be decomposed into smaller sub-systems
that interact with each other. In order to simplify the design of cyber-physical systems, we introduced in~\cite{DBLP:journals/corr/abs-2110-02214} a semantic model that abstracts from the internal
details of both cyber and physical processes. As first class entities in this model, a component
encapsulates a behavior (set of TESs) and an interface (set of events).
We recall basic definitions and properties in this section. \iftoggle{short}{}{See \ref{appendix:example} for additional examples.}

\subsection{Components }
\label{section:components}

\paragraph{Preliminaries}
A timed-event stream, TES, $\sigma$ over a set of events $E$ is an infinite sequence of \emph{observations}, where its $i^{th}$ observation $\sigma(i) = (O, t)$, $i \in \N$, consists of a pair of a subset of events in $O \subseteq E$, called the \emph{observable}, and a positive real number $t \in \Rp$ as time stamp.
A timed-event stream (TES) has the additional properties that its consecutive time stamps are monotonically increasing and non-Zeno, i.e., if $\sigma(i) = (O_i, t_i)$ is the $i^{th}$ element of TES $\sigma$, then (1) $t_i < t_{i+1}$, and (2) for any time $t \in \Rp$, there exists an element $\sigma(i) = (O_i, t_i)$ in $\sigma$ such that $t< t_i$.
We use $\sigma^{(k)}$ to denote the $k$-th derivative of the stream $\sigma$, such that $\sigma^{(k)}(i) = \sigma(i+k)$ for all $i \in \N$.
We refer to the stream of observables of $\sigma$ as its first projection $\pr_1(\sigma) \in \Po(E)^\omega$, and the stream of time stamps as its second projection $\pr_2(\sigma) \in \Rp^\omega$.
We write $(O,t) \in \sigma$ if there exists $i \in \N$ such that $\sigma(i) = (O,t)$.

\newcommand{\dom}{\mathit{dom}}
We write $\sigma(t) = O$ if there exists $i \in \N$ such that $\sigma(i) = (O,t)$, and $\sigma(t) = \emptyset$ otherwise. We use $\it{dom}(\sigma)$ to refer to the set of observable time stamps, i.e., the set $\it{dom}(\sigma) = \{ t\in\Rp \mid \exists i. \pr_2(\sigma)(i)=t\}$. Moreover, we use $\sigma\cup\tau$ to denote the stream such that, for all $t\in\Rp$, $(\sigma \cup \tau)(t) = \sigma(t) \cup \tau(t)$ and $\dom(\sigma \cup \tau) = \dom(\sigma) \cup \dom(\tau)$

A component denotes \emph{what} observables are possible, over time, given a fixed set of events. We give three examples of components, which capture some cyber-physical aspects of concurrent systems. 

\begin{definition}[Component]
     \label{def:component}
     A \emph{component} $C = (E,L)$ is a pair of a set of events $E$, called its \emph{interface}, and a \emph{behavior} $L \subseteq \TES{E}$.
\end{definition}
Given component $A = (E_A, L_A)$, we write $\sigma:A$ for a TES $\sigma \in L_A$.
    \begin{example}[Battery]
        A battery component is a pair $(E_B(C),L_B(C))$ with events $\rread(l) \in E_B$ for $0\% \leq l\leq 100\%$, $\textit{charge}(\mu) \in E_B$, and $\textit{discharge}(\mu) \in E_B$ with $\mu$ a (dis)charging coefficient in $\%$ per seconds,
         and $C$ a constant capacity in $\textit{mAH}$. 
The battery displays its capacity with the event $\textit{capacity(C)}$.
        The behavior $L_B$ is a set of sequences $\sigma \in L_B$ such that there exists a piecewise linear function $f:\Rp \to \Po(E_B)$ with, for $\sigma(i) = (O_i,t_i)$,  
        \begin{itemize}
            \item for $\sigma(0) = (O_0,t_0)$, $f([0;t_0]) = 100\%$, i.e., the battery is initially fully charged;
            \item if $O_i = \{\rread(l)\}$, then $f(t_i) = l$ and the derivation $f'_{[t_{i-1},t_{i+1}]}$ of $f$ is constant in $[t_{i-1},t_{i+1}]$, i.e., the observation does not change the slope of $f$ at time $t_i$;
            \item if $O_i =\{\textit{discharge}(\mu)\}$, then $f_{[t_i,t_{i+1}]}(t) = \max(f(t_i) - (t - t_i)\mu, 0)$;
            \item if $O_i =\{\textit{charge}(\mu)\}$, then $f_{[t_i,t_{i+1}]}(t) = \min(f(t_i) + (t - t_i)\mu, 100)$;
        \end{itemize}
        where $f_{[t_1;t_2]}$ is the restriction of function $f$ on the interval $[t_1;t_2]$.
        There is \emph{a priori} no restrictions on the time interval between two observations, as long as the sequence of timestamps is increasing and non-Zeno.
    $\square$
    \end{example}
\begin{example}[Robot]
    A robot with identifier $i$ is a component $R(i,T) = (E_R,L_R(T))$ with events $\rread(i,l) \in E_R$ for $0\% \leq l\leq 100\%$, $\textit{d}(i,p) \in E_R$ with $p$ the power requested by the robot for the move and $d$ the direction, and $T$ a period in seconds.
    For instance, the event $N(i,p)$ represents robot $i$ moving $N$orth with power $p$. 
    The robot reads the capacity of its battery with the event $\textit{getCapacity(i,C)} \in E_R$, with $C$ in $\textit{mAH}$. Once the robot knows the capacity of the battery, the values read in percent can be converted to remaining power.

    The behavior $L_R(T)$ contains any sequence of observations at fix period $T$, such that $\sigma \in L_R(T)$ if and only if $\sigma(i) = (O_i, t_i)$ implies $t_i = kT$ with $k\in\N$ and $O_i \subseteq E_R$ with $|O_i|=1$.
    We assume that the robot does one action at a time: either a read of its sensors, or a move in some direction.
    $\square$
\end{example}

\iftoggle{short}{
\begin{example}
    Most of cyber systems, such as embedded systems, run in sync with their clock.
    The clock therefore dictates how fast a sensor can be read, or how many decisions can be taken in a time interval. 
    The component version of such an embedded system has, as timing constraint, that each observable is time-stamped according to the clock frequency.
    As a result, each TES in its behavior has time-stamps that are multiples of the period $T$.

    \hfill$\blacksquare$
\end{example}
\begin{example}
    Physical parts of systems, such as batteries, heaters, fans, etc. provide a service that must satisfy some qualities.
    A battery delivers an amount of power that drives, for instance, the wheels of a robot; a heater radiates some energy to warm a room; a fan spins in order to cool down a hot piece. In different contexts, physical parts may behave differently. Capturing some use cases formally helps in assessing the quality and the adequacy of a particular system.

    For instance, a battery as a component captures all \emph{significant} (if not all \emph{possible}) sequences of charge/discharge and value readings. While only one of such sequence will be observable for each usage, the analysis of all its behavior is crucial to assess its quality.
    \hfill$\blacksquare$
\end{example}
}{}

\newcommand{\Stop}{\mathit{stop}}
\newcommand{\swap}{\mathit{swap}}

\subsection{Product and division}
\label{sec:algebra}
Components describe which observations occur over time.
When run concurrently, observable events from a component may relate to observable events of another component.
This relation defines what kind of interaction occurs between the two components, as it may enforce two events to occur within the same observable at the same time (e.g., actuation of a wheel and changes of location of the robot), or it may prevent two events to occur simultaneously (e.g., two robots moving to the same physical location). 
Interaction constraints are therefore captured by an algebraic operator that acts on components.
The result of forming the product of two components is a new component, whose behavior contains the composition of every pair of TESs, one from each product operand, that satisfies the underlying constraints imposed by that specific operator.

Let $A = (E_A, L_A)$ and $B=(E_B, L_B)$ be two components. We use the relation $R(E_A, E_B) \subseteq \TES{E_A} \times \TES{E_B}$ and the function $\oplus : \TES{E} \times \TES{E} \rightarrow \TES{E}$, with $E = E_A\cup E_B$, to range over composability relations and composition functions, respectively. We use $\Sigma$ to range over interaction signatures, i.e., pairs of a composability relation and a composition function.
 \begin{definition}[Product]
     \label{def:prod}
     The \emph{product} of components $A$ and $B$ under interaction signature $\Sigma = (R,\oplus)$ is the component $C = A \times_\Sigma B = (E_A \cup E_B, L)$  where
    $$ 
         L = \{ \sigma \oplus \tau \mid \sigma \in L_A, \tau \in L_B,\ (\sigma, \tau) \in R(E_A, E_B)\}    
     $$
 \end{definition}

For simplicity, we write $\times$ as a general product when the specific $\Sigma$ is irrelevant.
\begin{example}
    We define $\Sigma_{RB} = ([\kappa_{RB}],\cup)$ 
    where $\cup$ unions two TESs as defined in the preliminaries, and $[\kappa_{RB}]$ specifies co-inductively (see~\cite{DBLP:journals/corr/abs-2110-02214} for details of the construction), from a relation on observations $\kappa_{RB}$, how event occurrences relate in the robot and the battery components of capacity $C$.
    More specifically, $\kappa_{RB}$ is the smallest symmetric relation over observations such that $((O_1,t_1), (O_2,t_2)) \in \kappa_{RB}$ implies that $t_1=t_2$ and 
    \begin{itemize}
        \item the discharge event in the battery coincides with a move of the robot, i.e., $\textit{d}(i,p) \in O_1$ if and only if $\textit{discharge}(\mu) \in O_2$. Moreover, the interaction signature imposes a relation between the discharge coefficient $\mu$ and the required power $p$, i.e., $\mu = p/C$;
        \item the read value of the robot sensor coincides with a value from the battery component, i.e., $\rread(i,l) \in O_1$ if and only if $\rread(l) \in O_2$;
        \item the robot reads the capacity value that corresponds to the battery capacity, i.e., $\textit{getCapacity(i,c)} \in O_1$ if and only if $\textit{capacity(c)} \in O_2$.
    \end{itemize}
    The product $B \times_{\Sigma_{RB}} R(T,i)$ of a robot and a battery component, under the interaction signature $\Sigma_{RB}$, restricts the behavior of the battery to match the periodic behavior of the robot, and restricts the behavior of the robot to match the sensor values delivered by the battery.\\
    As a result, the behavior of the product component $B \times_{\Sigma_{RB}} R(T,i)$ contains all observations that the robot performs in interaction with its battery. Note that trace properties, such as \emph{all energy sensor values observed by the robot are within a safety interval}, does not necessarily entail safety of the system: some unobserved energy values may fall outside of the safety interval. Moreover, the frequency by which the robot samples may \emph{reveal} some new observations, and such robot can safely sample at period $T$ if, for any period $T'\leq T$, the product $B\times_{\Sigma_{RB}} R(T',i)$ satisfies the safety property. 
    $\square$
\end{example}
 
\iftoggle{short}{
\begin{example}
    Every move operation performed by a robot discharges its battery, and every read operation of a robot gets the value of the battery energy level.
    By composing a robot and a battery component, under an appropriate interaction signature, the new component exposes, for every read operation of the robot, the appropriate battery value.
    Typically, this is formally achieved by defining both robot and battery as components $R$ and $B$, for which the suitable interaction signature $\Sigma$ is given. The resulting product $R\times_\Sigma B$ captures the desired TESs in its behavior.
\end{example}
}{}

\newcommand{\ssync}{\mathit{sync}}
\iftoggle{short}{
Consider two components $B$ and $C$, and a product $\times$ capturing some interaction constraints between $B$ and $C$.
Then, the composite expression $A = C \times B$ captures, as a component, the concurrent observations of components $C$ and $B$ under the interaction modelled by $\times$. 
Consider a component $D$ such that $C \times B = D \times B$. If $D$ is different from $C$, then the equality states that the result of $D$ interacting with $B$ is the same as $C$ interacting with $B$. 
Consequently, in this context, component $C$ can be replaced by component $D$ while preserving the global behavior of $A$.

In general, a component $D$ that can substitute for $C$ is not unique. The set of alternatives for $D$ depends, moreover, on the product $\times$, on the component $B$, and on the behavior of $A$.  A `goodness' measure may induce an order on this set of components, and eventually give rise to a \emph{best} substitution.
More generally, the problem is to characterize, given two components $A$ and $B$ and an interaction product $\times$, the set of all $C$ such that $A = C \times B$.

Naturally, the definition of product comes with a dual decomposition operation. A quotient is a component that, whose product (under the same interaction signature) with the divisor, yields the dividend. 
As there may be several possible decompositions, we consider the set of all such possible quotients.
For simplicity, we assume $\times$ to be commutative. Right and left quotients can be similarly defined when $\times$ is not commutative.
\begin{definition}[Quotients]
    The quotients of $A$ by $B$ under the interaction signature $\Sigma$, written $A/^*_\Sigma B$, is the set $\{C \mid B \times_\Sigma C = A \}$.
\end{definition}
We say that $A$ is \emph{divisible} by $B$ (or $B$ divides $A$) under $\Sigma$ if the set of quotients is not empty.
We define division operators that pick, given a choice function, the best element from their respective sets of quotients as their quotients.
\begin{example}
    Consider a robot that performs $5$ moves, and then stops.
    Each move consumes some energy, and the robot therefore requires sufficient amount of energy to achieve its moves.
    The product of a robot $C$ with its battery $B$ under the interaction signature $\Sigma$ is given by the expression $A =C \times_\Sigma B$, where $\Sigma$ synchronizes a move of robot $C$ with battery $B$. 
    Note that different batteries behave differently.
    The set of batteries that would lead to the same behavior is given by the quotients of $A$ by $B$.
\end{example}

\begin{definition}[Division]
    The division of $A$ by $B$, under the interaction signature $\Sigma$ and the choice function $\chi$ over the quotients, is the element $\chi(A/_\Sigma^* B)$. We write $A/_\Sigma^\chi B$ to represent the division.
\end{definition}
\begin{example}
    It is usual (e.g., \cite{10.1145/3517192}) to consider the greatest common divisor when forming the product of cyber-physical components, so that no observation is missed. 
    Our operation of division, however, gives an alternative perspective.
    Let $C(H)$ be a component whose observations have multiples of $H\in\Rp$ as time-stamps.
    Then, let $A = C(H_1)\times_\Sigma C(H_2)$ with $H_1,H_2\in\Rp$.
    The set of components $\{C(H) \mid A = C(H_1) \times_\Sigma C(H), H \in \Rp\}$ contains all the quotients of $A$ divisible by $C(H_1)$. The selection of the component with the shortest period $H$ would be one choice function for the division of $A$ by $C(H)$ under $\Sigma$.
\end{example}

\begin{lemma}
    Let $\times_\Sigma$ be commutative.
    Given $A$ divisible by $B$ under $\Sigma$ and $\chi$ a choice function on the set of quotients of $A$ divisible by $B$, then $B \in A/_\Sigma^* (A /_\Sigma^\chi B)$.
\end{lemma}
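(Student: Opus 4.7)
The plan is essentially to unfold the definitions and apply commutativity. Let $C = A /_\Sigma^\chi B$. By the definition of division, $C = \chi(A/_\Sigma^* B)$, so in particular $C \in A/_\Sigma^* B$. Unfolding the definition of the quotient set, this means $B \times_\Sigma C = A$.

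Next, I want to show $B \in A/_\Sigma^* C$, which by definition of the quotient set amounts to showing that $C \times_\Sigma B = A$. Since $\times_\Sigma$ is assumed commutative, we have $C \times_\Sigma B = B \times_\Sigma C$, and we already established that the right-hand side equals $A$. Therefore $B$ lies in $A/_\Sigma^* C = A/_\Sigma^* (A/_\Sigma^\chi B)$, as required.

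There is no real obstacle here: the statement is a direct consequence of the definition of the quotient set together with the commutativity assumption on $\times_\Sigma$. The only subtle point worth flagging is that we must use the definition of $/_\Sigma^\chi$ to guarantee that its value actually lies in $/_\Sigma^*$ (which requires $A$ to be divisible by $B$, i.e., the quotient set to be nonempty, so that $\chi$ returns an element); this is precisely the divisibility hypothesis in the statement.
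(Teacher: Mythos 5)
Your proof is correct and follows essentially the same route as the paper's: set $C = A/_\Sigma^\chi B$, note $C \in A/_\Sigma^* B$ so $B \times_\Sigma C = A$, then apply commutativity to conclude $C \times_\Sigma B = A$, i.e., $B \in A/_\Sigma^* C$. Your additional remark that divisibility is needed for $\chi$ to return an actual quotient is a sensible clarification, but does not change the argument.
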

\begin{proof}
    If $A$ is divisible by $B$ under $\Sigma$ and if $\chi$ selects one quotient over the set, then $ C = A/_\Sigma^\chi B$ is such that $A = B \times_\Sigma C$.
    By commutativity of $\times_\Sigma$, $A = C \times_\Sigma B$ and $B \in A/_\Sigma^* C$.
\end{proof}
\begin{lemma}
    Let $\times_\Sigma$ be associative.
    If $A$ is divisible by $B$ under $\Sigma$ and $B$ is divisible by $C$ under $\Sigma$, then $A$ is divisible by $C$ under $\Sigma$.
\end{lemma}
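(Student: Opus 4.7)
The plan is to unpack the two divisibility hypotheses into concrete witness components, substitute one into the other, and then use associativity of $\times_\Sigma$ to re-bracket the expression so that $C$ appears as the left factor of a single product.

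First, I would unfold the definition of divisibility. Since $A$ is divisible by $B$ under $\Sigma$, the quotient set $A/^*_\Sigma B$ is non-empty, so I can pick some component $D \in A/^*_\Sigma B$, which by definition of quotient satisfies $A = B \times_\Sigma D$. Similarly, from $B$ divisible by $C$ under $\Sigma$, I can pick some $E \in B/^*_\Sigma C$, so $B = C \times_\Sigma E$.

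Next, I would perform the substitution and rebracketing. Replacing $B$ in the first equation by the expression given by the second yields $A = (C \times_\Sigma E) \times_\Sigma D$. Applying the associativity hypothesis on $\times_\Sigma$ then gives $A = C \times_\Sigma (E \times_\Sigma D)$. Hence the component $E \times_\Sigma D$ is an element of $A/^*_\Sigma C$, so this quotient set is non-empty, which is exactly the statement that $A$ is divisible by $C$ under $\Sigma$.

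I do not expect any real obstacle here: the argument is purely algebraic, relying only on the definition of divisibility via the quotient set and on the stated associativity of the product. The only mildly subtle point is that divisibility is an existence statement on the quotient set, so one must be careful to exhibit a concrete witness (namely $E \times_\Sigma D$) rather than simply manipulating equations between sets; once that witness is produced, the conclusion is immediate.
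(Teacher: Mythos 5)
Your proof is correct and follows essentially the same route as the paper's: extract witnesses $D$ and $E$ from the two divisibility hypotheses, substitute to get $A = (C \times_\Sigma E) \times_\Sigma D$, and re-bracket by associativity to exhibit $E \times_\Sigma D$ as a witness in $A/^*_\Sigma C$. No gaps; your explicit remark about producing a concrete witness is a nice touch but the argument is the same.
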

\begin{proof}
    If $A$ is divisible by $B$ under $\Sigma$, then there exists $D$ such that $A = B \times_\Sigma D$.
    If $B$ is divisible by $C$ under $\Sigma$, then there exists $E$ such that $B = C \times_\Sigma E$.
    By substitution, we have $A = (C \times_\Sigma E) \times_\Sigma D$.
    Using associativity of $\times_\Sigma$, we get $A = C \times_\Sigma (E \times_\Sigma D)$ which proves that $A$ is divisible by $C$ under $\times_\Sigma$.
\end{proof}

\begin{example}
    Consider the system 
    \[S(T_1, ..., T_n) =\ \bowtie_{i\in\{1,...,n\}}(R_i(T_i)\times_{\Sigma_{R_iB_i}} B_i) \bowtie_{\times_{RF}} F\] 
    made of $n$ robots $R_i(T_i)$, each interacting with a private battery $B_i$ under the interaction signatures $\Sigma_{R_iB_i}$, and in product with a field $F$ under the interaction signature $\Sigma_{RF}$. We use $\bowtie$ for the product with the free interaction signature (i.e., every pair of TESs is composable), and the notation $\bowtie_{i\in\{1,...,n\}} \{C_i\}$ for $C_1 \bowtie ...\bowtie C_n$ as $\bowtie$ is commutative and associative.\\
Consider the safety property $P_{\mathit{safe}}$, that captures as a component the set of all TESs that has observables of the batteries $B_i$ within two threshold values.\\
\end{example}
}{}

\section{System of agents and compositional semantics}
\label{sec:specification}
    Components in Section~\ref{sec:component} are declarative.
    Their behavior consists of all the TESs that satisfy some internal constraints. 
    The abstraction of internal states in components makes the specification of observables and their interaction easier.
    The downside of such declarative specification lies in the difficulty of generating an element from the behavior, and ultimately verifying properties on a product expression.

     An operational specification of a component provides a mechanism to construct elements in its behavior.
     An \emph{agent} is the operational specification that produces finite sequences of observations that, in the limit, determine the behavior of a component.
     An agent is stateful, and has transitions between states, each labeled by an observation, i.e., a set of events with a time-stamp.
     We consider a finite specification of an agent as a rewrite theory, where finite applications of the agent's rewrite rules generate a sequence of observables that form a prefix of some elements in the behavior of its corresponding component.
     We restrict the current work to integer time labeled observations. 
     While in the cyber-physical world, time is a real quantity, we consider in our fragment a countable infinite domain for time, i.e., natural numbers. 
     The time interval between two tics is therefore the same for all agents, and may be interpreted as, e.g., seconds, milliseconds, femtoseconds, etc.
     We show how an agent may synchronize with a local clock that forbids actions at some time values, thus modeling different execution speeds.

     An operational specification of a composite component provides a mechanism to construct elements in the behavior of a product expression.
     The product on components is parametrized by an interaction signature that tells \emph{which} TESs can compose, and \emph{how} they compose to a new TES.
     We consider, in the operational fragment of this section, interaction signatures each of whose composability relation is co-inductively defined from a relation on observations $\kappa$. Intuitively, such restriction enables a step-by-step operation to check that the head of each sequence is valid, i.e., extends the sequence to be a prefix of some elements in the composite component.
     Moreover, we require $\kappa$ to be such that the product on component $\times_{([\kappa],\cup)}$ is commutative and associative (see~\cite{DBLP:journals/corr/abs-2110-02214}). 
     By \emph{system} we mean a set of agents that compose under some interaction signature $\Sigma = ([\kappa],\cup)$.
     A system is stateful, where each state is formed from the states of its component agents, and has transitions between states, each labeled by an observation, formed from the component agent observations.
     We consider a finite specification of a system as the composition of a set of rewriting theories (one for each agent), and a system rewrite rule that produces a composite observation complying with the relation $\kappa$.
     \iftoggle{short}{}{
         We prove compositionality: the system component is equal to the product under the interaction signature $\Sigma = ([\kappa],\cup)$ of every one of its constituent agent components.}

     \iftoggle{short}{
     Agents and systems are both operational specifications of components. 
     An agent component consists of all TESs for which every finite prefix is generated by a finite sequence of its respective agent's rule applications.
     The system component consists of all TESs for which every finite prefix is generated by a finite sequence of the agents' rule applications while conforming to the composability relation $\kappa$.
     We prove compositionality: the system component is equal to the product under the interaction signature $\Sigma = ([\kappa],\cup)$ of every one of its constituent agent components.
 }{}

    In order to give to the agent a semantics as components, we recall some results and notations about TES transition systems $T = (Q,E,\to)$ (see \cite{https://doi.org/10.48550/arxiv.2205.13008} and \ref{appendix:TTS} for more results on TES transition systems)  where $Q$ is a set of states, $E$ a set of events, and $\to \subseteq Q \times (\Po(E)\times \Rp)\times Q$ a set of transitions.
    
    We write $q\xrightarrow{u} p$ for the sequence of transitions $q\xrightarrow{u(0)}q_1\xrightarrow{u(1)}q_2...\xrightarrow{u(n-1)}p$, where $u = \langle u(0), ..., u(n-1)\rangle \in(\Po(E)\times \Rp)^n$. We write $|u|$ for the size of the sequence $u$.

We use $\Lfin{T,q}$ to denote the set of finite sequences of observables labeling a finite path in $T$ starting from state $q$, such that 
\[\Lfin{T,q} =\{u \mid \exists q'. q\xrightarrow{u} q', \forall i< |u|-1. u(i) = (O_i,t_i) \land t_i < t_{i+1} \}\]
Additionally, the set $\Lfins{T,q}$ is the set of sequences from $\Lfin{T,q}$ postfixed with empty observations, i.e., the set 
\[\Lfins{T,q} = \{u\tau \in \TES{E}{} \mid u \in \Lfin{T,q} \textit{ and } \tau \in \TES{\emptyset}{}\}\]
We use $\Linf{T,q}$ to denote the set of TESs labeling infinite paths in $T$ starting from state $q$, such that \[\Linf{T,q} =\{\sigma\in \TES{E}\mid \forall n.\sigma[n]\in \Lfin{T,q}\}\] where, as introduced in Section~\ref{sec:component}, $\sigma[n]$ is the prefix of size $n$ of $\sigma$.
\newcommand{\cl}[1]{\mathit{cl}(#1)}
\newcommand{\tr}[1]{\mathcal{T}_{#1}}
\newcommand{\sem}[1]{\llbracket #1 \rrbracket}

Let $X\subseteq \TES{E}{}$, we use $\cl{X}$ 
to denote the set that contains the continuation with empty observations of any prefix of an element in $X$, i.e., 
$\cl{X} = \{ u\tau \in \TES{E}{} \mid \tau \in \TES{\emptyset}{}\ \ \it{and}\ \ \exists \sigma.\exists i. \sigma \in X \land \sigma[i] = u\}$.
Given a component $C = (E,L)$, we write $\cl{C}$ for the new component $(E,\cl{L})$.

\subsection{Action, agent, and system}
    We give the operational counterparts of an observation, a component, and a product of components as, respectively, an action, an agent, and a system of agents. See \ref{appendix:proof} for proof sketches.

    \paragraph{Action}
    Actions are terms of sort \sort{Action}. An action has a name of sort \sort{AName} and some parameters. 
    We distinguish two typical actions, the idle action $\star$ and the ending action \sort{end}.
    A term of sort \sort{Action} corresponds to an observable, i.e., a set of events.
    The idle action $\star$ and the ending action \sort{end} both map to the empty set of events.
    An example of an action is \sort{move(R1,d)} or \sort{read(R1, position, l)} that, respectively, moves agent \sort{R1} in direction \sort{d} or reads the value \sort{l} from the position sensor of \sort{R1}. 
    The semantics of action \sort{move(R1, d)} consists of all singleton event of the form $\{\sort{move(R1, d)}\}$ with $d$ a constant direction value.
    We use the operation $\cdot:\sort{Action Action}\to \sort{Action}$ to construct a composite action $\mathtt{a1 \cdot a2}$ out of two actions \sort{a1} and \sort{a2}.

\paragraph{Agent}
\noindent
    An agent operationally specifies a component in rewriting logic. We give the specification of an agent as a rewrite theory, and provide the semantics of an agent as a component. An agent is a four tuple  $(\Lambda, \Omega, \Eq, \Rightarrow)$, each of whose elements we introduce as follow.

    The set of sorts $\Lambda$ contains the \sort{State} sort and the \sort{Action} sort, respectively for state and action terms.
    A pair of a state and a set of actions is called a configuration.
    The set of function symbols $\Omega$ contains $\phi : \sort{State} \times \sort{Action} \to \sort{State}$, that takes a pair of a state and an action term to produce a new state.
    The $(\Lambda,\Omega)$-equational theory $\Eq$ specifies the update function $\phi$.
    The set of equations that specify the function $\phi$ can make $\phi$ both a continuous or discrete function.

    The rule pattern in (\ref{eq:agent-rw}) updates a configuration with an empty set to a new configuration, i.e., 
     \begin{equation}
         \label{eq:agent-rw}
     (s,\emptyset) \Rightarrow (s', \acts)
 \end{equation}
     with $\acts$ a non-empty set of action terms, and $s'$ a new state.
     We call an agent \emph{productive} if, for any state $s: \sort{State}$, there exists a state $s'$ with $(s,\emptyset) \Rightarrow (s',\acts)$ and $\acts$ non empty set. Such agent may eventually do the idling action $\star$. 

    We give a semantics of an agent as a component by considering the limit application of the agent rewrite rules.
    We construct a TES transition system $\mathcal{T}_\mathcal{A} = (Q,E,\to)$ as an intermediate representation for agent $\mathcal{A} = (\Lambda, \Omega, \Eq,\Rightarrow)$.
    The set of states $Q = \sort{State}\times \mathbb{N}$ is the set of pairs of a state of $\mathcal{A}$ and a time-stamp natural number. We use the notation $[s,t]$ for states in $Q$ where $t \in \N$.
    The set of events $E$ is the union of all observables labeling the transition relation $\to\subseteq Q \times (\Po(E)\times \N) \times Q$, defined as the smallest set such that, for $t \in \N$:
    \begin{equation}
        \label{eqn:agent-rul}
    \cfrac{(s,\emptyset) \Rightarrow (s',\acts)\qquad a \in \acts \qquad \phi(s',a)=_\Eq s''}{[s,t] \xrightarrow{(a,t+1)} [s'',t+1]}
    \end{equation}

    An agent that performs a rewrite moves the global time from one unit forward. 
    All agents share the same time semantically, and we show some mechanisms at the system level to artificially run some agents \emph{faster} than others.

Let $\mathcal{A} = (\Lambda,\Omega,\Eq,\Rightarrow)$ be an agent initially in state $s_0 \in S$ at time $t_0\in\N$.
The finite, respectively infinite, component semantics of $\mathcal{A}$ is the component $\llbracket \mathcal{A}([s_0,t_0]) \rrbracket^* = (E,\Lfins{\mathcal{T}_\mathcal{A}, [s_0,t_0]})$, respectively the component $\llbracket \mathcal{A}([s_0,t_0]) \rrbracket = (E,\Linf{\mathcal{T}_\mathcal{A}, [s_0,t_0]})$, with $E = \bigcup_{ a \in \sort{Action}} a$.

\begin{lemma}[Closure]
    \label{lemma:closure}
    Let $\mathcal{A}$ be a productive agent initially in state $[s_0,t_0]$. Then  $\sem{A([s_0,t_0])}^* = \cl{\sem{A([s_0,t_0])}}$.
\end{lemma}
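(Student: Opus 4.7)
The plan is to prove the two inclusions $\sem{A([s_0,t_0])}^* \subseteq \cl{\sem{A([s_0,t_0])}}$ and $\cl{\sem{A([s_0,t_0])}} \subseteq \sem{A([s_0,t_0])}^*$ separately, using productivity only for the first direction. Throughout I work with the TES transition system $\mathcal{T}_\mathcal{A}$ and identify components with their behaviors.

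For the inclusion $\cl{\sem{A([s_0,t_0])}} \subseteq \sem{A([s_0,t_0])}^*$, suppose $u\tau \in \cl{\Linf{\mathcal{T}_\mathcal{A},[s_0,t_0]}}$ with $\tau \in \TES{\emptyset}$. By definition of the closure there exist $\sigma \in \Linf{\mathcal{T}_\mathcal{A},[s_0,t_0]}$ and $i \in \N$ such that $\sigma[i] = u$. Since $\sigma \in \Linf{\mathcal{T}_\mathcal{A},[s_0,t_0]}$, every prefix of $\sigma$ lies in $\Lfin{\mathcal{T}_\mathcal{A},[s_0,t_0]}$; in particular $u = \sigma[i] \in \Lfin{\mathcal{T}_\mathcal{A},[s_0,t_0]}$. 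Hence $u\tau \in \Lfins{\mathcal{T}_\mathcal{A},[s_0,t_0]}$, which is the desired conclusion. This direction does not use productivity.

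For the inclusion $\sem{A([s_0,t_0])}^* \subseteq \cl{\sem{A([s_0,t_0])}}$, let $u\tau \in \Lfins{\mathcal{T}_\mathcal{A},[s_0,t_0]}$. By definition there is a finite path $[s_0,t_0] \xrightarrow{u} [s_n,t_n]$ in $\mathcal{T}_\mathcal{A}$. The plan is to construct, using productivity, an infinite extension $[s_n,t_n] \xrightarrow{v(0)} [s_{n+1},t_{n+1}] \xrightarrow{v(1)} \cdots$ and then set $\sigma = u \cdot v$. Productivity says that for every state $s$ there is a rewrite $(s,\emptyset) \Rightarrow (s',\acts)$ with $\acts$ non-empty; combined with rule (\ref{eqn:agent-rul}), from any $[s_k,t_k]$ we can fire a transition to some $[s_{k+1}, t_k + 1]$ labeled by $(a,t_k+1)$ for some action $a \in \acts$. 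Iterating this inductively (e.g., by the axiom of dependent choice) yields an infinite path whose time stamps are $t_n+1, t_n+2, \dots$, hence strictly increasing and non-Zeno, so the resulting $\sigma$ is a legitimate TES in $\Linf{\mathcal{T}_\mathcal{A},[s_0,t_0]}$ and satisfies $\sigma[|u|] = u$. Therefore $u\tau \in \cl{\Linf{\mathcal{T}_\mathcal{A},[s_0,t_0]}}$.

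The only subtlety worth flagging is the existence of the infinite extension, which is where productivity is essential: without it, an agent could reach a state from which no rule applies, so that a finite sequence $u$ would have no infinite prolongation and the $\supseteq$-style inclusion from $\Lfins$ into the closure of $\Linf$ would fail. Given productivity, the monotonicity and non-Zeno conditions on the constructed $\sigma$ are automatic from the fact that each application of rule (\ref{eqn:agent-rul}) increments the time stamp by exactly one unit, so no extra care is needed to verify that $\sigma$ is indeed a TES.
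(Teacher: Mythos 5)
Your proof is correct and follows essentially the same route as the paper's: the inclusion $\cl{\Linf{\mathcal{T}_\mathcal{A},[s_0,t_0]}} \subseteq \Lfins{\mathcal{T}_\mathcal{A},[s_0,t_0]}$ by unfolding the definitions, and the converse by using productivity to extend any finite path to an infinite one. Your version merely spells out in more detail the construction of the infinite extension and the verification that the resulting stream is a genuine TES, which the paper leaves implicit.
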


    Lemma~\ref{lemma:closure} gives a condition under which a step by step execution of the agent is sound with respect to generating prefixes of elements in the component semantics.
    More precisely, if an agent $\mathcal{A}$ is productive, Lemma~\ref{lemma:closure} ensures that finite sequences of rewrite rule applications generate finite sequences of observations each of which is a prefix of an element in the behavior of the component corresponding to $\mathcal{A}$.
    Alternatively, if $\mathcal{A}$ is not productive, a finite sequence of rule application may lead to a state for which no rule applies anymore. In such a case, there may not be any corresponding element in the agent component for which such finite sequence is a prefix. 

\paragraph{System} 
    A system gives an operational specification of a product of a set of components under $\Sigma = ([\kappa],\cup)$.
    The composability relation $\kappa$ is fixed to be symmetric, so that the product $\times_\Sigma$ is commutative. We define $[\kappa]$ co-inductively, as in~\cite{DBLP:journals/corr/abs-2110-02214, https://doi.org/10.48550/arxiv.2205.13008}.
    Formally, a system consists of a set of agents with additional sorts, operations, and rewrite rules.
    A system is a tuple $(\mathcal{A},\Lambda, \Omega, \Eq, \Rightarrow_S)$ where $\mathcal{A}$ is a set of agents. We use $(\Lambda_i, \Omega_i,\Eq_i,\Rightarrow_i)$ to refer to agent $\mc{A}_i \in \mc{A}$.

    The set of sorts $\Lambda$ contains a sort $\sort{Action} \in \Lambda$ which is a super sort of each sort $\sort{Action}_i$ for $\mc{A}_i \in \mc{A}$. 
    The set $\Omega$ contains the function symbol $\sort{comp}:\sort{Action}\times\sort{Action} \to \sort{Bool}$, which relates pairs of action terms.
    Given two actions \sort{a1,a2:Action}, $\mathtt{comp(a1,a2)}=\sort{True}$ when the two actions \sort{a1} and \sort{a2} are \emph{composable}.
    The set of equations $\Eq$ specifies the composability relation \sort{comp}. 
    First, we impose $\sort{comp}$ to be symmetric, i.e., for all actions \sort{a1,a2:Action}, $\mathtt{comp(a1,a2)}=\mathtt{comp(a2,a1)}$.
    Second, we assume that
    $\mathtt{comp(a1\cdot a2,\ a3)}$ and \sort{comp(a1, a2)} hold if and only if \sort{comp(a2, a3)} and $\mathtt{comp(a1,\ a2\cdot a3)}$ hold, for any actions \sort{a1, a2, a3} from disjoint agents.
    Given a set $\mathtt{actions}$ of actions, we use the notation $\mathtt{comp(actions)}$ 
    for the predicate that is \sort{True} if 
    all pairs of actions in \sort{actions} are composable, i.e., for all \sort{a1, a2} in \sort{actions}, $\mathtt{comp(a1,a2)}$ is \sort{True} and for all agent $\mc{A}_i$ such that there is no $\mathtt{a3:Action_i} \in \sort{actions}$, then $\mathtt{comp(a1,\star_i)}$ is \sort{True}.
    We call a set \sort{actions} of actions for which $\mathtt{comp(actions)}$ holds, a \emph{clique}.
The conditions for a set of actions to form a clique models the fact that each action in the clique is independent from agent $\mc{A}_i$ with no action in that clique (see Section~\ref{sec:general-framework} for an instance of \sort{comp}), and therefore composable with the silent action $\star_i$.%
\iftoggle{short}{
\begin{example}
    Suppose that action \sort{a1} from agent \sort{A1} reads the battery sensor value provided by agent \sort{A2} and the position sensor value provided by agent \sort{A3}.
    The composability relation \sort{comp} may impose that the read action \sort{a1} succeeds only if \sort{A2} and \sort{A3} do a write action \sort{a2} and \sort{a3}. In which case, the value read by \sort{A1} coincides with the value written by \sort{A2} and \sort{A3}, and the clique is, for instance, the set $\mathtt{\{read((battery, position), l, (x,y)), display(battery, l), display(position, (x,y))\}}$ where a robot does a \sort{read} action coincidentally with the \sort{display} action from its battery and the field on which it moves.
\end{example}
}{}
The relation \sort{comp} can be graphically modelled as an undirected graph relating actions, where a clique is a connected component. 

The rewrite rule pattern in (\ref{eq:sys-rw}) selects a set of actions, at most one from each agent, checks that the set of actions forms a clique with respect to \sort{comp}, and applies the update accordingly. For $\{k_1, ..., k_j\} \subseteq \{1,...,n\}$:
    \begin{equation} 
        \label{eq:sys-rw}
    \{(s_{k_1}, \acts_{k_1}), ..., (s_{k_j}, \acts_{k_j})\} \Rightarrow_S \{(\phi_{k_1}(s_{k_1}, a_{k_1}), \emptyset), ..., (\phi_{k_j}(s_{k_j}, a_{k_j}), \emptyset)\}     
\end{equation}
if $\sort{comp}(\bigcup_{i \in [1,j]} \{a_{k_i}\}))$.
As we show later, a system does not necessarily update all agents in lock steps, and an agent not doing an action may stay in the configuration $(s,\emptyset)$.
As multiple cliques may be possible, there is non-determinism at the system level. Different strategies may therefore choose different cliques as, for instance, taking the largest clique.

We define the transition system for $\mathcal{S} = (\mc{A},\Lambda, \Omega, \Eq, \Rightarrow_S)$ as the TES transition system $\tr{\mc{S}} = (Q,E,\to)$ with $Q = \sort{StateSet}\times\N$ the set of states, $E$ the union of all observables labeling the transition relation $\to \subseteq Q\times (\Po(E)\times\N) \times Q$, which is the smallest transition relation such that, for $\{k_1, ..., k_j\} \subseteq \{1,...,n\}$:
\begin{equation}
    \label{eqn:sys-rul}
\cfrac{
    \{(s_{k_i}, \acts_{k_i})\}_{i\in[1,j]} \Rightarrow_S \{(\phi_{k_i}(s_{k_i}, a_{k_i}),\emptyset)\}_{i\in[1,j]} \quad  \bigwedge_{i\in [1,j]} \phi_{k_i}(s_{k_i},a_{k_i}) =_{\Eq_i} s''_{k_i}
}
{[\{s_i\}_{i\in[1,n]},t] \xrightarrow{(\bigcup_{i\in[1,j]} a_{k_i},t+1)} [\{s_1,...,s''_{k_1}, ..., s''_{k_j}, ..., s_n\}, t+1]}
\end{equation}
for $t\in \N$ and where we use the notation $\{x_i\}_{i\in[1,n]}$ for the set $\{x_1,...,x_n\}$.
\begin{remark}
    The top left part of the rule is a rewrite transition at the system level. As defined earlier, the condition for such rewrite to apply is the formation of a clique by all of the actions in the update. The states and labels of the TES transition system (bottom of the rule) are sets of states and sets of labels from the TES transition system of every agent in the system.
\end{remark}

Let $\mathcal{A} = \{\mathcal{A}_1, ..., \mathcal{A}_n\}$ be a set of agents, and let $\mathcal{S} = (\mathcal{A}, \Lambda,\Omega,\Eq, \Rightarrow_S)$ be a system initially in state $\{(s_{0i},\emptyset)\}_{i\in[1,n]}$ at time $t_0$ such that, for all $i\in[1,n]$, $\mathcal{A}_i$ is initially in state $s_{0i}$ at time $t_0$. 
The finite, respectively infinite, semantics of initialized system $\mc{S}([s_0,t_0])$, is the component $\llbracket \mc{S}([s_0,t_0]) \rrbracket^* = (E,\Lfins{\tr{\mc{S}},[s_0,t_0]})$, respectively $\llbracket \mc{S}([s_0,t_0]) \rrbracket = (E,\Linf{\tr{\mc{S}},[s_0,t_0]})$, where $E = \bigcup_{i\in [1,n]}E_i$ with $E_i$ the set of events for the agent component $\llbracket \mc{A}([s_{0i},t_0]) \rrbracket$.

    Given a composability relation \sort{comp}, we define the interaction signature $\Sigma = ([\kappa_\sort{comp}],\cup)$, with $\kappa_\sort{comp}(E_1, E_2) \subseteq (\Po(E_1)\times\N)\times(\Po(E_2)\times\N)$ to be such that, for $\mathtt{ai:Action_i}$ and $\mathtt{aj:Action_j}$:
    \begin{itemize}
        \item if $\mathtt{comp(ai, aj)}$,  then $((a_i, n), (a_j,n)) \in \kappa_\sort{comp}(E_i,E_j)$ for all $n\in\N$, i.e., two composable actions occur at the same time;
        \item if $\mathtt{comp(ai, \star_j)}$, then $((a_i,n), (a,k)) \in \kappa_\sort{comp}(E_i,E_j)$ for all $(a,k) \in \Po(E_j)\times\N$ with $k\geq n$, i.e.,  $\mc{A}_j$ may have an action at arbitrary future time.
    \end{itemize}
    with $E_i$ the set of events of agent $\mc{A}_i$.
\begin{lemma}[Composability]
    \label{lem:prop}
    If $\mathtt{Action_i}\cap\mathtt{Action_j} = \emptyset$ for all disjoint agents $i$ and $j$, then 
    the product $\times_{([\kappa_\sort{comp}],\cup)}$ is commutative and associative.
\end{lemma}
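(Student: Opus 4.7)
The plan is to reduce commutativity and associativity of $\times_{([\kappa_\sort{comp}],\cup)}$ to (i) the corresponding algebraic properties of TES union $\cup$, and (ii) the corresponding closure properties of the coinductive composability relation $[\kappa_\sort{comp}]$. Pointwise union of TESs is manifestly commutative and associative, so the substance of the proof lies in (ii).

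For commutativity I would proceed in two steps. The first equation imposed on $\Eq$ makes $\sort{comp}$ symmetric, hence the one-step relation $\kappa_\sort{comp}$ is symmetric as well: the composable-action case is symmetric by construction, and the silent-action clause is symmetric because agents $i$ and $j$ play interchangeable roles in its definition. A routine coinductive argument then shows that the largest relation built from a symmetric step-relation is itself symmetric, so $[\kappa_\sort{comp}]$ is symmetric. Combined with the commutativity of $\cup$ on TESs, Definition~\ref{def:prod} immediately gives $A \times_\Sigma B = B \times_\Sigma A$.

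For associativity the structural content sits in the equational axiom
\[
\mathtt{comp(a_1 \cdot a_2,\ a_3)} \land \mathtt{comp(a_1, a_2)} \;\Longleftrightarrow\; \mathtt{comp(a_2, a_3)} \land \mathtt{comp(a_1,\ a_2 \cdot a_3)}.
\]
Under the hypothesis $\mathtt{Action_i}\cap\mathtt{Action_j}=\emptyset$, any composite observation at a given time stamp splits uniquely into its agent-$i$, agent-$j$, and agent-$k$ parts. I would use this uniqueness to lift the axiom to an observation-level equivalence: $(o_1,o_2)\in\kappa_\sort{comp}$ together with $(o_1\cup o_2, o_3)\in\kappa_\sort{comp}$ holds if and only if $(o_2,o_3)\in\kappa_\sort{comp}$ together with $(o_1, o_2\cup o_3)\in\kappa_\sort{comp}$. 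A coinductive argument, carried out over triples of simultaneously consumed TESs, then promotes this step-wise equivalence to $[\kappa_\sort{comp}]$. Together with the pointwise associativity of $\cup$ on TESs, this yields $(A\times_\Sigma B)\times_\Sigma C = A\times_\Sigma(B\times_\Sigma C)$.

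I expect the main obstacle to be the case analysis required to lift the equational axiom to $\kappa_\sort{comp}$: one has to cover the combinations where some of the three actions are silent actions $\star_i$, which activates the second clause in the definition of $\kappa_\sort{comp}$ and forces the composability equation to be instantiated with $\star$ as a neutral element. The disjointness of action alphabets is precisely what makes the projections back to each agent unambiguous so that these cases compose cleanly; carefully formalising this bookkeeping, and matching it against the coinductive extension $[\kappa_\sort{comp}]$, is where the only genuine difficulty lies.
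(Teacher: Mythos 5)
Your proposal is correct and follows essentially the same route as the paper's proof: symmetry of $\sort{comp}$ gives symmetry of $\kappa_\sort{comp}$ and hence commutativity, and the equational axiom on $\mathtt{comp(a_1\cdot a_2, a_3)}$ is lifted (using disjointness of the action alphabets and a case analysis over silent actions / mismatched time stamps) to exactly the observation-level equivalence $((a_1,n),(a_2,k))\in\kappa \wedge ((a_1,n)+(a_2,k),(a_3,l))\in\kappa \iff ((a_2,k),(a_3,l))\in\kappa \wedge ((a_1,n),(a_2,k)+(a_3,l))\in\kappa$ that the paper establishes. The only cosmetic difference is that the paper then discharges the coinductive promotion of this step-wise condition to $[\kappa_\sort{comp}]$ by appealing to Lemma~7 of the cited prior work, whereas you propose to carry out that coinductive argument directly.
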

\begin{theorem}[Compositional semantics]
    \label{thm:compositionality}
    Let $\mathcal{S} = (\mathcal{A}, \Lambda,\Omega,\Eq, \Rightarrow_S)$ be a system of $n$ agents with disjoint actions and $[\{s_{01},...,s_{0n}\},t_0]$ as initial state. We fix $\Sigma = ([\kappa_{\sort{comp}}],\cup)$. 
    Then,
    $\llbracket \mathcal{S}([s_0,t_0]) \rrbracket  = \times_\Sigma \{\llbracket \mathcal{A}_i([s_{0i},t_0]) \rrbracket\}_{i\in[1,n]}$ 
    and 
    $\llbracket \mathcal{S}([s_0,t_0]) \rrbracket^*  = \times_\Sigma \{\llbracket \mathcal{A}_i([s_{0i},t_0]) \rrbracket^*\}_{i\in[1,n]}$.
\end{theorem}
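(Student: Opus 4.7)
The plan is to prove both equalities by double inclusion, treating the finite and infinite cases in parallel, since the finite case can be handled by repeating the argument over finite prefixes or by invoking Lemma~\ref{lemma:closure} to rewrite each side as a $\cl{\cdot}$-closure. The centrepiece is a bijective correspondence between one system transition, given by rule~(\ref{eqn:sys-rul}) at time $t+1$ with participating agents $J = \{k_1,\dots,k_j\}$ performing a clique of actions $\{a_{k_1},\dots,a_{k_j}\}$, and a tuple of $n$ simultaneous agent transitions in which each $\mathcal{A}_i$ with $i \in J$ performs $a_i$ and each $\mathcal{A}_i$ with $i \notin J$ performs the idle action $\star_i$ (so that its state is preserved). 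I assume throughout that each $\mathcal{A}_i$ is productive (so a rule producing $\star_i$ is always enabled) and that $\phi_i(s,\star_i) =_{\Eq_i} s$; both are standard conventions for idle actions. The associativity and commutativity of $\times_\Sigma$ granted by Lemma~\ref{lem:prop} lets me iterate the binary product unambiguously into the displayed $n$-ary form.

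For the forward inclusion $\sem{\mathcal{S}([s_0,t_0])} \subseteq \times_\Sigma \{\sem{\mathcal{A}_i([s_{0i},t_0])}\}_{i\in[1,n]}$, I take $\sigma \in \Linf{\tr{\mathcal{S}}, [s_0,t_0]}$ and, from each system step $k$, read off the participating set $J_k$ and the clique actions $\{a_i^{(k)}\}_{i\in J_k}$. I then project by setting $\sigma_i(k) = (a_i^{(k)},\, t_0{+}k{+}1)$ if $i\in J_k$ and $\sigma_i(k) = (\emptyset,\, t_0{+}k{+}1)$ otherwise. The sequence of agent rule applications induced by this choice (using the productive $\star_i$-rule on non-participation steps) is a legal path in $\tr{\mathcal{A}_i}$, hence $\sigma_i \in \sem{\mathcal{A}_i([s_{0i},t_0])}$. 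Moreover, $\sigma = \bigcup_i \sigma_i$ holds pointwise by the very definition of the system label in~(\ref{eqn:sys-rul}). Composability $(\sigma_1,\dots,\sigma_n) \in [\kappa_{\sort{comp}}]$ lifts coinductively from the per-step clique property: at every time index every pair of non-idle actions is composable (by the clique condition), and any idle action is composable with any other action in the clique by the extension of \sort{comp} to $\star$.

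For the reverse inclusion, given $\sigma_i \in \sem{\mathcal{A}_i([s_{0i},t_0])}$ with $(\sigma_1,\dots,\sigma_n) \in [\kappa_{\sort{comp}}]$, I build a system run inductively. At step $k$, I define $J_k$ as the set of agents whose observable in $\sigma_i(k)$ is non-empty. The assumption $(\sigma_1,\dots,\sigma_n) \in [\kappa_{\sort{comp}}]$ unfolds into the property that $\{a_i : i \in J_k\}$ is a clique under \sort{comp}: pairs of non-idle actions are mutually composable by the first clause of $\kappa_{\sort{comp}}$, while any agent outside $J_k$ is idling at that index and so composable with every action in $J_k$ by the $\star_j$ clause. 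Rule~(\ref{eqn:sys-rul}) then applies and produces a system transition labeled by $\bigcup_{i\in J_k} a_i = \bigcup_i \sigma_i(k)$, with successor state aggregating the agent-level successor states (non-participating agents keep their states, matching $\phi_i(s,\star_i) = s$). Concatenating these transitions yields an infinite path in $\tr{\mathcal{S}}$ whose label is exactly $\bigcup_i \sigma_i$, giving membership in $\sem{\mathcal{S}([s_0,t_0])}$.

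The main obstacle is the formal bookkeeping that connects the coinductive TES-level relation $[\kappa_{\sort{comp}}]$ to the per-step clique predicate \sort{comp} in the $n$-ary setting: one must verify, using associativity of $\times_\Sigma$ together with the symmetry and $\star$-closure axioms on \sort{comp}, that the binary coinductive relation correctly characterises simultaneous composability of arbitrary cliques rather than merely pairs. Once this is in place, the finite statement $\sem{\mathcal{S}([s_0,t_0])}^* = \times_\Sigma \{\sem{\mathcal{A}_i([s_{0i},t_0])}^*\}_{i\in[1,n]}$ follows by applying Lemma~\ref{lemma:closure} on both sides and checking that $\cl{\cdot}$ distributes over $\times_\Sigma$ for this interaction signature, which is routine since empty-observation tails are composable with empty-observation tails.
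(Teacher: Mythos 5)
Your route is genuinely different from the paper's. The paper does not argue directly on traces: it proceeds by induction on the number of agents, forms the product of TES transition systems $\tr{\mc{S}} \times_\kappa \tr{\mc{A}_{n+1}}$ (using the product construction on TES transition systems from the appendix), and exhibits a bisimulation with $\tr{\mc{S}'}$ up to an equivalence that re-aligns the time components of product states; compositionality of components then follows from the already-established correspondence between products of TES transition systems and products of their denoted components. Your direct double inclusion with an explicit project/zip correspondence is more elementary and avoids the bisimulation-up-to-time-translation machinery, which is a real simplification in the integer-time setting where every agent transition advances time by exactly one unit.

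However, there are two places where your proposal stops short of a proof. First, the step you yourself call ``the main obstacle'' is not bookkeeping --- it is the theorem's actual content. The $n$-ary product $\times_\Sigma\{\sem{\mc{A}_i}\}_i$ is an iterated \emph{binary} product, so membership requires the nested conditions $(\sigma_1,\sigma_2)\in[\kappa]$, $(\sigma_1\cup\sigma_2,\sigma_3)\in[\kappa]$, and so on; your argument treats $[\kappa_\sort{comp}]$ as a flat $n$-ary relation and asserts, without proof, that it coincides with the per-step clique predicate $\sort{comp}$. Establishing that equivalence requires exactly the axiom $\mathtt{comp(a1\cdot a2,a3)}\land\mathtt{comp(a1,a2)}\iff\mathtt{comp(a2,a3)}\land\mathtt{comp(a1,a2\cdot a3)}$ applied along the nesting, which is what the paper's one-agent-at-a-time induction delivers for free; deferring it leaves the core of both inclusions unjustified. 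Second, your forward-direction projection $\sigma_i(k)=(\emptyset,t_0{+}k{+}1)$ for non-participating agents is a legal path in $\tr{\mc{A}_i}$ only if an idle transition is enabled at every reachable state, and your reverse direction needs $\phi_i(s,\star_i)=_{\Eq_i}s$. Neither is a hypothesis of the theorem (productivity as defined in the paper only guarantees \emph{some} non-empty action set, not that $\star_i$ is offered), so as written you prove a weaker statement than the one claimed. You should either derive these facts from the system rewrite rule's clique condition for non-participating agents, or state them as additional hypotheses and note the discrepancy.
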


\section{Application}
\label{sec:implementation}

    We present the Maude implementation of the rewrite theories described in Section~\ref{sec:specification}. 
    We first describe our general framework as currently implemented in Maude, separating the agent modules, from the system module, and the composability relation. 
    The framework is instantiated for a system consisting of two robot agents, each interacting with a (shared) field and a (private) battery agent \iftoggle{short}{}{(more details can be found in~\ref{appendix:agent})}. 
    \iftoggle{short}{We show how to specify each agent equationally, and how to specify their interaction.}{}
    Finally, we run some analysis on the system using the Maude reachability search engine.
The implementation of the framework in Maude can be found in~\cite{cp-agents}.

\subsection{General framework}
\label{sec:general-framework}

\paragraph{Actions}
An action is a pair that contains the name of the action, and the set of agent identifiers on which the action applies.
An agent action is identified by the source agent identifier, and is a triple \texttt{(id, (a; ids))} where \texttt{id} is the agent doing the action with name \texttt{a} onto the set of agents \texttt{ids}, that we call resources of agent \texttt{id} for action named \texttt{a}.
\begin{lstlisting}
fmod ACTION is
    inc STRING . inc BOOL . inc SET{Id} . ...
    sort AName Action AgentAction .
    op (_;_) : AName Set{Id} -> Action [ctor] .
    op (_,_) : Id Action -> AgentAction [ctor] .
    op mta : -> AgentAction .
endfm
\end{lstlisting}

\paragraph{Agent}
    The \texttt{AGENT} module in Listing~\ref{lst:agent} defines the theories on which an agent relies, the \sort{Agent} sort, and operations that an agent instance must implement.
    The module is parametrized with a \sort{CSEMIRING} theory, that is used to rank actions of an agent. 
    Additionally, the \texttt{AGENT} includes modules that define state and action terms. 
    A term of sort \texttt{IdStates} is a pair of an identifier and a map of sort \texttt{MapKD}. \\
    A term of sort \sort{Agent} is a tuple \texttt{[id: C| state; ready?; softaction]}.
    The identifier \sort{id} is unique for each agent of the same class \sort{C}.
    The state \sort{state} of an agent is a map from keys to values. 
    For instance, the state of a robot has three keys, \texttt{position}, \texttt{energy}, and \texttt{lastAction}, with values in \texttt{Location}, \texttt{Status}, and \texttt{Bool}. 
    The flag \sort{ready?} is of sort \sort{Bool} and is \sort{True} when the agent has submitted a possibly empty list of actions, and \sort{False} otherwise. 
    The pending actions \sort{softaction} is a set of actions valued in the parametrized \texttt{CSEMIRING}. The use of a constraint semiring as a structure for action valuations enables various kinds of reasoning about preferences at the agent and system levels.
We use the two operations of the csemiring, sum $+$ and product $\times$,  as respectively modeling the choice and the compromise of two alternatives. See~\cite{wirsing-etal-2007, talcott-arbab-yadav-15wirsing-fest, kappe-etal-2019scp-sca} for more details.\\
    An agent instance implements four operations: \texttt{computeActions},  \texttt{getOutput},  \texttt{getPostState}, and \texttt{internalUpdate}.
    The operation \texttt{computeActions}, given a \sort{state:MapKD} of agent \sort{id} of class \sort{C}, returns a set of valued actions in the parametrized \sort{CSEMIRING}.
    The operation \texttt{internalUpdate}, given a \sort{state:MapKD} of agent \sort{id} of class \sort{C}, returns a new state \sort{state':MapKD}.
    For instance, an agent may record in its state, as an internal update, the outcome of \texttt{computeActions} and change the value that the key \texttt{lastAction} maps to.
    The \texttt{getOutput} operation, given an action name \sort{a:Name} from agent identified by \sort{id2} applied to an agent \sort{id} of class \sort{C} in a state \sort{state}, returns a collection of outputs \sort{outputs = getOutput(id, C, id2, an, state)}.
    The outputs generated by \sort{getOutput} are of sort \sort{MapKD} and therefore structured as a mapping from keys to values. For instance, the output of the action named \sort{read} applied on a field agent has a key \sort{position} that maps to the position value of the agent doing the \sort{read} action.
    The operation \texttt{getPostState}, given an action name \sort{a:AName} with inputs \sort{input:IdStates} from agent identified by \sort{id2} applied on an agent \sort{id1} of class \sort{C} in a state \sort{state}, returns a new state \sort{state' = getPostState(id1, C, id2, an, input, state)}.
    The input \sort{input:IdStates} is a collection of key to value mappings that results from collecting the outputs, i.e., with \sort{getOutput}, of an action \sort{(id, an, ids)} on all its resources in \sort{ids}.

    {
\begin{lstlisting}[basicstyle=\footnotesize\ttfamily, caption={Extract from the \sort{AGENT} Maude module.}\label{lst:agent}]
fmod AGENT{X :: CSEMIRING} is
 inc IDSTATE .  inc ACTION .
 sort Agent .
 op [_:_|_;_;_] : Id Class MapKD Bool X$Elt -> Agent [ctor].
 op computeActions : Id Class MapKD -> X$Elt .
 op internalUpdate : Id Class MapKD -> MapKD .
 op getPostState : Id Class Id AName IdStates MapKD -> MapKD 
 op getOutput : Id Class Id AName MapKD -> MapKD .
endfm
\end{lstlisting}
}

The agent's dynamics are given by the rewrite rule in Listing~\ref{lst:rw-agent}, that updates the pending action to select one atomic action from the set of valued actions:
\begin{lstlisting}[caption={Conditional rewrite rule applying on agent terms.}\label{lst:rw-agent}]
crl[agent] : [sys [id : ac | state  ; false ; null]] => 
      [sys [id : ac | state' ; true  ; softaction]] 
    if softaction + sactions := computeActions(id, ac, state) 
       /\ state' := internalUpdate(id, ac, state) .
\end{lstlisting}
The rewrite rule in Listing~\ref{lst:rw-agent} implements the abstract rule of Equation~\ref{eqn:agent-rul}.
After application of the rewrite rule, the \sort{ready?} flag of the agent is set to \texttt{True}. The agent may, as well, perform an internal update independent of the success of the selected action.

\paragraph{System}
    The \sort{SYSTEM} module in Listing~\ref{lst:sys} defines the sorts and operations that apply on a set of agents.
    The sort \sort{Sys} contains set of \sort{Agent} terms, and the term \sort{Global} designates top level terms on which the system rewrite rule applies (as shown in Listing~\ref{lst:rwl-sys}).
    The \sort{SYSTEM} module includes the \sort{Agent} theory parametrized with a fixed semiring \sort{ASemiring}.
    The theory \sort{ASemiring} defines valued actions as pairs of an action and a semiring value. While we assume that all agents share the same valuation structure, we can also define systems in which such a preference structure differs for each agent.
    The \sort{SYSTEM} module defines three operations: \sort{outputFromAction}, \sort{updateSystemFromAction}, and \sort{updateSystem}.
    The operation \sort{outputFromAction} returns, given an agent action \sort{(id, (an, ids))} applied on a system \sort{sys}, a collection of identified outputs \sort{idOutputs = outputFromAction((id, (an, ids)), sys)} given by the union of \sort{getOutput} from all agents in \sort{ids}.
    The operation \sort{updatedSystemFromAction} returns, given an agent action \sort{(id, (an, ids))} applied on a system \sort{sys}, an updated system \sort{sys' = updatedSystemFromAction((id, (an, ids)), sys)}. 
    The updated system may raise an error if the action is not allowed by some of the resource agents in \sort{ids} (see the battery-field-robot example in~\ref{appendix:agent}).
    The updated system, otherwise, updates \emph{synchronously} all agents with identifiers in \sort{ids} by using the \sort{getPostState} operation.
    The operation \sort{updateSystem} returns, given a list of agent actions \sort{agentActions} and a system term \sort{sys}, a new system \sort{updateSystem(sys, agentActions)} that performs a sequential update of \sort{sys} with every action in \sort{agentActions} using \sort{updatedSystemFromAction}.
    The list \sort{agentActions} ends with a delimiter action \sort{end} performed on every agent, which may trigger an error if some expected action does not occur (see \sort{PROTOCOL} in~\ref{appendix:agent}).

    \vspace{-1em}
\begin{lstlisting}[caption={Extract from the \sort{SYSTEM} Maude module.}\label{lst:sys}]
fmod SYS is
  inc AGENT{ASemiring} . sort Sys  Global . 
  subsort Agent < Sys . op [_] : Sys -> Global [ctor] .
  op __ :  Sys Sys -> Sys [ctor assoc comm id: mt] .   ...
  op outputFromAction : AgentAction Sys -> IdStates .
  op updatedSystemFromAction : AgentAction Sys -> Sys .
  op updateSystem : Sys List{AgentAction} -> Sys .
endfm
\end{lstlisting}
    The rewrite rule in Listing~\ref{lst:rwl-sys} applies on terms of sort \sort{Global} and updates each agent of the system synchronously, given that their actions are composable.
    The rewrite rule in Listing~\ref{lst:rwl-sys} implements the abstract rule of Equation~\ref{eqn:sys-rul}.
    The rewrite rule is conditional on essentially two predicates: \texttt{agentsReady?} and \texttt{kbestActions}.
    The predicate \sort{agentsReady?} is \sort{True} if every agent has its \sort{ready?} flag set to \sort{True}, i.e., the agent rewrite rule has already been applied.
    The operation \sort{kbestActions} returns a ranked set of cliques (i.e., composable lists of actions), each paired with the updated system.
    The element of the ranked set are lists of actions containing at most one action for each agent, and paired with the system resulting from the application of \sort{updateSystem}. If the updated system has reached a \sort{notAllowed} state, then the list of actions is not composable and is discarded.
    The operations \sort{getSysSoftActions} and \sort{buildComposite} form the set of lists of composite actions, from the agent's set of ranked actions, by composing actions and joining their preferences.

\begin{lstlisting}[caption={Conditional rewrite rule applying on system terms.}\label{lst:rwl-sys}]
crl[transition] : [sys]  => [sys']
 if agentsReady?(sys)  /\ saAtom := getSysSoftActions(sys) /\ 
  saComp := buildComposite(saAtom , sizeOfSum(saAtom)) /\
  p(actseq, sys') ; actseqs := kbestActions(saComp, k, sys) .
\end{lstlisting}
\paragraph{Composability relation}
The term \sort{saComp} defines a set of valued lists of actions. 
Each element of \sort{saComp} possibly defines a clique. 
The operation \sort{kbestActions} specifies which, from the set \sort{saComp}, are cliques. We describe below the implementation of \sort{kbestActions}, given the structure of action terms.\\
An action is a triple \sort{(id, (an, ids))}, where \sort{id} is the identifier of the agent performing the action \sort{an} on resource agents \sort{ids}.
Each resource agent in \sort{ids} reacts to the action \sort{(id, (an, ids))} by producing an output \sort{(id', an, O)} (i.e., the result of \sort{getOutput}). Therefore, $\mathtt{comp((id, (an, ids)), a_i)}$ holds, with $\mathtt{a_i:Action_i}$ and $i \in \sort{ids}$, only if $\mathtt{a_i}$ is a list that contains an output \sort{(i, an, O)}, i.e., an output to the action.
If one of the resources outputs the value \sort{(i,notAllowed(an))}, the set is discarded as the actions are not pairwise composable. Conceptually, there are as many action names \sort{an} as possible outputs from the resources, and the system rule~(\ref{eqn:agent-rul}) selects the clique for which the action name and the outputs have the same value. In practice, the list of outputs from the resources get passed to the agent performing the action. 

\iftoggle{short}{
\subsection{Instances}

    Section~\ref{sec:general-framework} introduces the signature for an agent module, and the rewrite rules for an agent and a system.
    The instance of an agent module provides an equational theory that implements each operation, namely \texttt{computeActions}, \texttt{internalUpdate}, \texttt{getOutput}, and \texttt{getPostState}.
    Each instance comes with an interface, called \texttt{AGENT-INTERFACE} in which the action names for \texttt{AGENT} are constructed.
    For instance, the interface for the robot agent called \texttt{TROLL}  contains the constructors for action names \texttt{move: direction -> Action} and \texttt{read: sensorName -> Action}.
    An agent that interacts with another agent must therefore include the interface module of that agent.
    We also assume that each agent shares the same preference structure, which we call action semiring (written \texttt{ASemiring}).
    The action semiring consists of an action paired with a natural number preference value.
     To illustrate the use of our framework to simulate and verify cyber-physical systems, we present an agent specification for three components: a \texttt{FIELD}, a \texttt{TROLL}, and a \texttt{BATTERY}. 
     A \texttt{FIELD} component interacts with the \texttt{TROLL} component by reacting to its move action, and its sensor reading.
     As shown in Listing~\ref{lst:field} the \texttt{FIELD} agent has no actions, but reacts to the move action of the \texttt{TROLL} agent by updating its state and changing the agent's location. 
     Currently, the update is discrete, but more sophisticated updates can be defined (e.g., changing the mode of a function recording the trajectory of the \sort{TROLL} agent).
     In the case where the state of the \sort{FIELD} agent forbids the \sort{TROLL} agent's move, the \sort{FIELD} agent enters in a disallowed state marked as \texttt{notAllowed(an)}, with \texttt{an} as the action name.
     The \sort{FIELD} responds to the read sensor action by returning the current location of the \sort{TROLL} agent as an output.
}{}

\iftoggle{short}{
     A \texttt{TROLL} agent reacts to no other agent actions, and therefore does not include any agent interface.
     However, the \texttt{TROLL} agent returns a ranked set of actions given its state with the \texttt{computeActions} operation. The expression may contain more than one action, with different weights. The weights of the action may depend on the internal goal that the agent set to itself, as for instance reaching a location on the field.
     The \texttt{TROLL} agent specifies how it reacts to, e.g., the sensor value input from the field, by updating the corresponding field in its state, i.e., with \texttt{getSensorValues}.
}{}

\iftoggle{short}{
     A \texttt{BATTERY} agent does not act on any other agent, as the \texttt{FIELD}, but reacts to the \texttt{TROLL} agent actions.
     Each \texttt{move} action triggers in the \sort{BATTERY} agent a change of state that decreases its energy level. As well, each \texttt{charge} action changes the \sort{BATTERY} agent state to increase its energy level.
     Similarly to the field, in the case where the state of the battery agent has $0$ energy, the battery enters a disallowed state marked as \texttt{notAllowed(an)}, with \texttt{an} as the action name.
     A sensor reading by the \texttt{TROLL} agent triggers an output from the \sort{BATTERY} agent with the current energy level.
}{}

\iftoggle{short}{
\textcolor{blue}{
\paragraph{Composability relation}
The \sort{TROLL}, \sort{FIELD}, and \sort{BATTERY} modules specify the state space and transition functions for, respectively, a \sort{TROLL}, \sort{FIELD}, and \sort{BATTERY} agent. A system consisting of a set of instances of such agents would need a composability relation to relate actions from each agent.\\
More precisely, we give the possible \emph{cliques} of a system consisting of two \sort{TROLL} agents with identifiers  \sort{id(0), id(1):TROLL}, one \sort{field:FIELD} agent, and two \sort{BATTERY} agents \sort{bat(0),bat(1):BATTERY}. \\
The action of agent \sort{id(0)} composes with outputs of its corresponding battery \sort{bat(0)} and of the shared \sort{field} agent.\\
For instance, a move action of the \sort{id(0)} agent is of the form \sort{(id(0), (move(d), \{bat(0), field\}))}, where \sort{d} is a direction for the move, and composes with outputs of the battery and field, both notifying that the move is possible.\\
Alternatively, a read action of the \sort{id(0)} agent is of the form \sort{(id(0), (read, \{bat(0), field\}))} and composes with outputs of the battery and field, each giving the battery level and the location of agent \sort{id(0)}.
}

\paragraph{System} 
The agents defined above are instantiated within the same system to study their interactions.
We consider a system containing two \texttt{TROLL} agents, with identifiers \texttt{id(0)} and \texttt{id(1)}, paired with two \sort{BATTERY} agents with identifier \texttt{bat(0)} and \texttt{bat(1)}, and sharing the same \texttt{FIELD} resource. 
\textcolor{blue}{
The goal for each agent is to reach the initial location of the other agent. If both agent follow the shortest path to their goal location, there is an instant for which the two agents need to swap their positions.
The crossing can lead to a livelock, where agents move symmetrically until the energy of the batteries runs out.
}
}{}

\subsection{Analysis in Maude}
We analyze in Maude two scenarios. In one, each robot has as strategy to take the shortest path to reach its goal. As a consequence, a robot reads its position, computes the shortest path, and submits a set of optimal actions. A robot can sense an obstacle on its direct next location, which then allows for sub-optimal lateral moves (e.g., if the obstacle is in the direct next position in the West direction, the robot may go either North or South). 
In the other scenario, we add a protocol that swaps the two robots if robot \texttt{id(0)} is on the direct next location on the west of robot \texttt{id(1)}. The swapping is a sequence of moves that ends in an exchange of positions of the two robots. 
\iftoggle{short}{}{See \ref{appendix:agent} for details on the \sort{TROLL}, \sort{FIELD}, \sort{BATTERY}, and \sort{PROTOCOL} agents specified in Maude, and for the specification of the \sort{init} term for both scenarios.}

In the two scenarios, we analyze the behavior of the resulting system with two queries.
The first query asks if the system can reach a state in which the energy level of the two batteries is $0$, which means that its robot can no longer move: 
\vspace{-0.3em}
\begin{lstlisting}
search [1] init =>* [sys::Sys  
       [ bat(1) : Battery | k(level) |-> 0 ; true ; null],
       [ bat(2) : Battery | k(level) |-> 0 ; true ; null]] .
\end{lstlisting}
 The second query asks if the system can reach a state in which the two robots successfully reached their goals, and end in the expected locations:
\vspace{-0.3em}
\begin{lstlisting}
search [1] init =>* [sys::Sys  [ field : Field | k(( 5 ; 5 ))
  |-> d(id(0)), k(( 0 ; 5 )) |-> d(id(1)) ; true ; null]] .
\end{lstlisting}

  As a result, when the protocol is absent, the two robots can enter in a livelock behavior and eventually fail with an empty battery: 
  \begin{lstlisting}
Solution 1 (state 80)                                                                                            
states: 81  rw: 223566 in 73ms cpu (74ms real) (3053554 rw/s)
  \end{lstlisting}

Alternatively, when the protocol is used, the livelock is removed using exogenous coordination.
  The two robots therefore successfully reach their end locations, and stop before running out of battery: 
  \begin{lstlisting}
No solution. states: 102 
rewrites: 720235 in 146ms cpu (145ms real) (4920041 rw/s)
  \end{lstlisting}

  In both cases, the second query succeeds, as there exists a path for both scenarios where the two robots reach their end goal locations.
  The results can be reproduced by downloading the archive at~\cite{cp-agents}.
\section{Related work}

\vspace{-0.2em}
\paragraph{Real-time Maude}
Real-Time Maude is implemented in Maude as an extension of Full Maude~\cite{DBLP:conf/wrla/Olveczky14}, and is used in applications such as in~\cite{DBLP:conf/cav/LeeKBO21}.
    There are two ways to interpret a real-time rewrite theory, called the pointwise semantics and the continuous semantics.
    Our approach to model time is similar to the pointwise semantics for real-time Maude, as we fix a global time stamp interval before execution. The addition of a composability relation, that may discard actions to occur within the same rewrite step, differs from the real-time Maude framework.

\vspace{-0.2em}
\paragraph{Models based on rewriting logic} 
In [21], the modeling of cyber-physical systems from an actor perspective is discussed. The notion of event comes as a central concept to model interaction between agents.
Softagents \cite{talcott-arbab-yadav-15wirsing-fest} is a framework for specifying and analyzing adaptive cyber-physical systems implemented in Maude. It has been used to analyze systems such as vehicle platooning \cite{dantas20vnc} and drone surveillance \cite{mason17cosim}.
In Softagents agents interact by sharing knowledge and resources implemented as part of the system timestep rule.

    Softagents only considers compatibility in the sense of reachability of desired
or undesired states.   Our approach provides more structure enabling static
analysis.
Our framework allows, for instance, to consider compatibility of a robot with a battery (i.e., changing the battery specification without altering other agents in the system), and coordination of two robots with an exogenous protocol, itself specified as an agent.

\vspace{-0.2em}
\paragraph{Algebra, co-algebra}
The algebra of components described in this paper is an extension of~\cite{DBLP:journals/corr/abs-2110-02214}.
Algebra of communicating processes~\cite{DBLP:books/daglib/0000497} (ACP) achieves similar objectives as decoupling processes from their interaction. For instance, the encapsulation operator in process algebra is a unary operator that restricts which actions may occur, i.e., $\delta_H(t \parallel s)$ prevents $t$ and $s$ to perform actions in $H$. Moreover, composition of actions is expressed using communication functions, i.e., $\gamma(a,b)=c$ means that actions $a$ and $b$, if performed together, form the new action $c$.
Different types of coordination over communicating processes are studied in~\cite{BERGSTRA1984109}.

\vspace{-0.2em}
\paragraph{Discrete Event Systems}
Our work represents both cyber and physical aspects of systems in a unified model of discrete event systems~\cite{Nivat1982, DBLP:conf/birthday/Arbab11}.
In~\cite{doi:10.1146/annurev-control-053018-023659}, the author lists the current challenges in modelling cyber-physical systems in such a way. 
The author points to the problem of modular control, where even though two modules run without problems in isolation, the same two modules may block when they are used in conjunction. 
In~\cite{DBLP:journals/tac/SampathLT98}, the authors present procedures to synthesize supervisors that control a set of interacting processes and, in the case of failure, report a diagnosis. 
An application for large scale controller synthesis is given in~\cite{MOORMANN2021104902}.
Our framework allows for experiments on modular control, by adding an agent controller among the set of agents to be controlled. The implementation in Maude enables the search of, for instance, blocking configurations.

\vspace{-0.8em}
\section{Conclusion}
We give an operational specification of the algebra of components defined in~\cite{DBLP:journals/corr/abs-2110-02214}.
An agent specifies a component as a rewrite theory, and a system specifies a product of components as a set of rewrite theories extended with a composability relation.
We show compositionality, i.e., that the system specifies a component that equals to the product, under a suitable interaction signature, of components specified by each agent.

We present an implementation of our framework in Maude, and instantiate a set of components to model two energy sensitive robots roaming on a shared field. 
We analyze the behavior of the resulting system before and after coordination with a protocol, and show how the protocol can prevent livelock behavior.

The modularity of our operational framework and the interpretation of agents as components in interaction add structure to the design of cyber-physical systems. The structure can therefore be exploited to reason about more general properties of CPSs, such as compatibility, sample period synthesis, etc.

\paragraph{Acknowledgement}
Talcott was partially supported by the U. S. Office of Naval Research under award numbers N00014-15-1-2202 and N00014-20-1-2644, and NRL grant N0017317-1-G002.
Arbab was partially supported by the U. S. Office of Naval Research under award number N00014-20-1-2644.

\bibliographystyle{plain}
\bibliography{references}
\newpage
\appendix
\section{TES transition system}
\label{appendix:TTS}

The behavior of a component as in Definition~\ref{def:component} is a set of TESs. 
We give an specification of such behavior using a labelled transition system.
\begin{definition}[TES transition system]
A TES transition system is a triple $(Q,E,\rightarrow)$ where $Q$ is a set of states, $E$ is a set of events, and $\rightarrow \subseteq Q \times (\Po(E) \times \Rp) \times Q$ is a labeled transition relation, where labels are observations.
\hfill$\triangle$
\end{definition}

We present two different ways to give a semantics to a TES transition system: inductive and co-inductive. 
Both definitions give the same behavior, as shown in Theorem $1$ in~\cite{https://doi.org/10.48550/arxiv.2205.13008}.

\paragraph{Semantics 1 (runs).}
Let $T = (Q,E,\rightarrow)$ be a TES transition system.
Given $s \in (\Po(E)\times \Rp)^n$, we write $q \xrightarrow{s} p$ for the sequence of transitions $q \xrightarrow{s(0)} q_1 \xrightarrow{s(1)} q_2\ ... \xrightarrow{s(n)} p$. We use $\to^*$ and $\to^\omega$ to denote, respectively, the set of finite and infinite sequences of consecutive transitions in $\to$.
Then, finite sequences of observables form the set $\Lfin{T,q} = \{ \sigma \in \TES{E} \mid q \xrightarrow{s} q', \exists n. s = \sigma[n] \land \sigma^{(n)} \in \TES{\emptyset} \}$ and infinite ones, the set $\Linf{T,q} = \{ \sigma \in \TES{E} \mid \forall n. \sigma[n] \in \Lfin{T,q} \}$ where, as introduced in Section~\ref{sec:component}, $\sigma[n]$ is the prefix of size $n$ of $\sigma$.
The semantics of such a TES transition system $T = (Q,E,\rightarrow)$, starting in a state $q \in Q$, is the component $C_T(q) = (E, \Linf{T,q})$. 

\paragraph{Semantics 2 (greatest post fixed point)}
Alternatively, the semantics of a TES transition system is the greatest post fixed point of a function over sets of TESs paired with a state. 
For a TES transition system $T = (Q, E, \to)$, let $\mathcal{R} \subseteq \TES{E} \times Q$. We introduce $\phi_{T} : \Po(\TES{E}\times Q) \rightarrow \Po(\TES{E}\times Q)$
as the function:
$$
\begin{array}{rl}
    \phi_T(\Rel) = \{ (\tau,q) \mid   & \exists p \in Q,\ q\xrightarrow{\tau(0)}p \land (\tau',p) \in \Rel\}
\end{array}
$$

The product of two components is parametrized by a composability relation and a composition function and syntactically constructs the product of two TES transition systems.
\begin{definition}[Product]
The product of two TES transition systems $T_1 = (Q_1, E_1, \to_1)$ and $T_2 = (Q_2, E_2, \to_2)$ under the constraint $\kappa$ is the TES transition system $T_1 \times_\kappa T_2 = (Q_1 \times Q_2, E_1 \cup E_2, \to)$ such that:
\[
\cfrac{
    q_1 \xrightarrow{(O_1, t_1)}_1 q_1' \quad q_2 \xrightarrow{(O_2, t_2)}_2 q_2' \quad ((O_1, t_1), (O_2, t_2)) \in \kappa(E_1, E_2) \quad t_1 < t_2
}{
(q_1,q_2) \xrightarrow{(O_1, t_1)} (q_1',q_2)
}
\]
\[
\cfrac{
    q_1 \xrightarrow{(O_1, t_1)}_1 q_1' \quad q_2 \xrightarrow{(O_2, t_2)}_2 q_2' \quad ((O_1, t_1), (O_2, t_2)) \in \kappa(E_1, E_2) \quad t_2 < t_1
}{
(q_1,q_2) \xrightarrow{(O_2, t_2)} (q_1,q_2')
}
\]
\[
\cfrac{
    q_1 \xrightarrow{(O_1, t_1)}_1 q_1' \quad q_2 \xrightarrow{(O_2, t_2)}_2 q_2' \quad ((O_1, t_1), (O_2, t_2)) \in \kappa(E_1, E_2) \quad t_1 = t_2
}{
(q_1,q_2) \xrightarrow{(O_1\cup O_2, t_1)} (q_1',q_2')
}
\]
\hfill$\triangle$
\end{definition}

Observe that the product is defined on pairs of transitions, which implies that if $T_1$ or $T_2$ has a state without outgoing transition, then the product has no outgoing transitions from that state. The reciprocal is, however, not true in general.

Theorem $1$ in~\cite{https://doi.org/10.48550/arxiv.2205.13008} states that the product of TES transition systems denotes
(given a state) the set of TESs that corresponds to the product of the
corresponding components (in their respective states).
Then, the product that we define on TES transition systems does not add nor
remove behaviors with respect to the product on their respective components.

\section{Proofs}
\label{appendix:proof}

\begin{proof}
    Given that $\sem{A([s_0,t_0])}^* = (E,\Lfins{\mathcal{T}_\mathcal{A}, [s_0,t_0]})$ and $\cl{\sem{A([s_0,t_0])}} = (E,\Linf{\mathcal{T}_\mathcal{A}, [s_0,t_0]})$, we have to show that $\Lfins{\mathcal{T}_\mathcal{A}, [s_0,t_0]} = \cl{\Linf{\mathcal{T}_\mathcal{A}, [s_0,t_0]}}$.
    \begin{align*}
        \cl{\Linf{\mathcal{T}_\mathcal{A}, [s_0,t_0]}} & = \{ s\tau \in \TES{E}{}\mid \tau \in \TES{\emptyset}{}\ \ \it{and}\\
                                                       &\qquad\qquad\qquad\quad\qquad\ \exists \sigma.\exists i. \sigma \in \Linf{\mathcal{T}_\mathcal{A}, [s_0,t_0]} \land \sigma[i] = s\} \\
                                                       & = \{ s\tau \in \TES{E}{}\mid \tau \in \TES{\emptyset}{}\ \ \it{and}\\
                                                       &\qquad\qquad\qquad\quad\qquad\ \exists \sigma.\exists i. \forall n. \sigma[n] \in \Lfin{\mathcal{T}_\mathcal{A}, [s_0,t_0]} \land \sigma[i] = s\} \\
                                                       &\subseteq \Lfins{\mathcal{T}_\mathcal{A}, [s_0,t_0]}
    \end{align*}

    The other direction comes from the assumption that $\mc{A}$ is productive. Then, every reachable state in $\tr{A}$ has an outgoing transition and therefore every finite sequence of transition is a prefix of an infinite sequence.
    Thus, $\Lfins{\mathcal{T}_\mathcal{A}, [s_0,t_0]} \subseteq \cl{\Linf{\mathcal{T}_\mathcal{A}, [s_0,t_0]}}$.
    \qed
\end{proof}

\begin{proof}[Sketch - Lemma~\ref{lem:prop}]
    We abbreviate $\kappa_\sort{comp}$ to $\kappa$, and use $\Sigma = ([\kappa],\cup)$.
    We know that:
    \begin{enumerate}
        \item for all actions \sort{a1} and \sort{a2}, \sort{comp(a1,a2) = comp(a2,a1)};
        \item for all $\mathtt{a1:Action_1}$, $\mathtt{a2:Action_2}$, and $\mathtt{a3:Action_3}$, \sort{comp(a1, a2)} and $\mathtt{comp(a1\cdot a2,\ a3)}$ if and only if \sort{comp(a2, a3)} and $\mathtt{comp(a1,\ a2\cdot a3)}$.
    \end{enumerate}
    Item 1 implies symmetry of $\kappa$ and commutativity of $\times_\Sigma$.

    We show that, for three observations $(a_1,n)$, $(a_2, k)$, and $(a_3,l)$:
    \begin{align*}
    &((a_1,n), (a_2, k)) \in \kappa(E_1,E_2)\land ((a_1,n)+(a_2, k), (a_3,l)) \in \kappa(E_1\cup E_2,E_3)\\
 \iff &((a_2,k), (a_3, l)) \in \kappa(E_2,E_3)\land ((a_1,n),(a_2, k)+(a_3,l)) \in \kappa(E_1,E_2\cup E_3)
 \end{align*}
    where $((a,u),(b,v)) = (a\cup b, u)$ if $u = v$, $(a,u)$ if $u<v$, and $(b,v)$ otherwise.

    Suppose that $n=k=l$.
    Then, 
    \begin{align*}
         & ((a_1,n), (a_2, n)) \in \kappa(E_1,E_2)\land ((a_1\cup a_2, n), (a_3,n)) \in \kappa(E_1\cup E_2,E_3) \\
        \iff &\ \sort{comp(a1,a2)} \land  \mathtt{comp(a1\cdot a2, a3)} \\
        \iff &\ \mathtt{comp(a2, a3)} \land  \mathtt{comp(a1,a2\cdot a3)} \\
        \iff & ((a_2,n), (a_3, n)) \in \kappa(E_2,E_3)\land ((a_1, n), (a_2\cup a_3,n)) \in \kappa(E_1,E_2\cup E_3) 
    \end{align*}
    The second equivalence follows from $E_1$ and $E_2$ being disjoint.

    Suppose that $n< k$, then $((a_1,n), (a_2,k)) \in \kappa(E_1,E_2)$ if and only if $((a_1,n), (\emptyset,n)) \in \kappa(E_1,E_2)$, by definition of $\kappa$.
    Thus, for $n=l<k$, we have:
    \begin{align*}
         & ((a_1,n), (a_2, k)) \in \kappa(E_1,E_2)\land ((a_1, n), (a_3,n)) \in \kappa(E_1\cup E_2,E_3) \\
        \iff& ((a_1,n), (\emptyset, n)) \in \kappa(E_1,E_2)\land ((a_1, n), (a_3,n)) \in \kappa(E_1\cup E_2,E_3) \\
        \iff &\ \mathtt{comp(a1,\star_2)} \land  \mathtt{comp(a1\cdot \star_2, a3)} \\
        \iff &\ \mathtt{comp(\star_2, a3)} \land  \mathtt{comp(a1,\star_2\cdot a3)} \\
        \iff & ((a_2,k), (a_3, n)) \in \kappa(E_2,E_3)\land ((a_1, n), (a_3,n)) \in \kappa(E_1,E_2\cup E_3) 
    \end{align*}
    Similar reasoning apply when $n\not = l$ or $l \not = k$.

    We can conclude that $\kappa_\sort{comp}$ satisfies the condition of Lemma 7 in~\cite{DBLP:journals/corr/abs-2110-02214}, and $\times_{([\kappa_\sort{comp}],\cup)}$ is commutative and associative.
\end{proof}

\begin{proof}[Sketch - Theorem~\ref{thm:compositionality}]
    The proof uses the result of Lemma~\ref{lem:prop} that $\times_{([\kappa_\sort{comp}],\cup)}$ is associative and commutative. 
    Then, we give an inductive proof that $\llbracket \mathcal{S}([s_0,t_0]) \rrbracket  = \times_\Sigma \{\llbracket \mathcal{A}_i([s_{0i},t_0]) \rrbracket\}_{i\in[1,n]}$.
    We fix $\mc{S} = (\{\mc{A}_1, ..., \mc{A}_n\},\Lambda,\Omega, \Eq,\Rightarrow_S)$ and $\mc{A}_{n+1} = (\Lambda_{n+1},\Omega_{n+1},\Eq_{n+1},\Rightarrow_{n+1})$, such that \sort{comp} in $\Omega$ relates action of agents in $\{\mc{A}_1,..., \mc{A}_{n+1}\}$. Let $\mc{S}' = (\{\mc{A}_1, ..., \mc{A}_n,\mc{A}_{n+1}\},\Lambda, \Omega,\Rightarrow_S)$. 

    We show that $\tr{\mc{S}} \times_\kappa \tr{\mc{A}_{n+1}} = (Q,E,\to)$ and $\tr{\mc{S}'} = (Q',E',\to')$ are bisimilar, which consists in the existence of a relation $\mathcal{R} \subseteq Q \times Q'$ such that, for all $(q,r) \in \Rel$:
    \begin{enumerate}
        \item $\forall q' \in Q$ with $q \xrightarrow{(O,t)} q'$, there exists $r' \in Q'$ with $r \xrightarrow{(O,t)} r'$; and
        \item $\forall r' \in Q'$ with $r \xrightarrow{(O,t)} r'$, there exists $q' \in Q$ with $q \xrightarrow{(O,t)} q'$.
    \end{enumerate}

    First, 
    we define an equivalence relation $\sim$ on states in $Q$ as $([s_\mc{S},t], [s_\mc{A},t']) \sim ([s_\mc{S},\max(t,t')], [s_\mc{A},\max(t',t)])$. Then, we define the  set of states $Q_\sim$ such that $([s_\mc{S},\max(t,t')], [s_\mc{A},\max(t',t)]) \in Q_\sim$ if and only if $([s_\mc{S},t'], [s_\mc{A},t]) \in Q$  or $([s_\mc{S},t'], [s_\mc{A},t]) \in Q$.
    We show that the TES transition system $\tr{\mc{S}}\times_\kappa\tr{\mc{A}_{n+1}}$ projected to states in $Q_\sim$ is bisimilar to $\tr{\mc{S}}\times_\kappa\tr{\mc{A}_{n+1}}$. The reason is that the transition rules in $\tr{\mc{S}}$ and $\tr{\mc{A}_{n+1}}$ universally quantify over time $t\in\N$, which allows arbitrary positive translation in time. 
    As a consequence, states in $Q'$ can be embedded in states in $Q$.

    We now prove $1$ and $2$ by showing that $([s_1',t]) \xrightarrow{(O,t')}_{'} ([s_2',t'])$ if and only if $([s_1,t], [q_1,t]) \xrightarrow{(O,t')} ([s_2,\max(t',t)], [q_2,\max(t,t')])$ where $s_1$ and $s_2$ are states in $\tr{\mc{S}}$, $q_1$ and $q_2$ are states in $\tr{\mc{A}_{n+1}}$ and $s_1'$ and $s_2'$ are states in $\tr{\mc{S}'}$.
    We split cases on whether the observation comes from $\mc{S}$, from $\mc{A}$, or is a joint observation.
    We use the equational theory of the system to prove the result. 
$\triangle$
\end{proof}

\iftoggle{short}{}{
\section{Additional examples}
\subsection{A system of cyber-physical agents: an example}
\label{appendix:example}

This section illustrates our approach on an intuitive and simple cyber-physical system consisting of two robots roaming on a shared field.
A robot exhibits some cyber aspects, as it takes discrete actions based on its readings. Every robot interacts, as well, with a shared physical resource as it moves around. 
The field models the continuous response of each action (e.g., read or move) performed by a robot. 
A question that will motivate the paper is: given a strategy for both robots (i.e., sequence of moves based on their readings), will both robots, sharing the same physical resource, achieve their goals? If not, can the two robots, without changing their policy, be externally coordinated towards their goals?

In this paper, we specify components in a rewriting framework in order to simulate and analyze their behavior.
In this framework, an \emph{agent}, e.g., a robot or a field, specifies a component as a rewriting theory.
A \emph{system} is a set of agents that run concurrently.
The equational theory of an agent defines how the agent states are updated, and may exhibit both continuous and discrete transformations. The dynamics is captured by rewriting rules and an equational theory at the system level that describes how agents interact.
In our example, for instance, each move of a robot is synchronous with an effect on the field. 
Each agent therefore specifies how the action affects its state, and the system specifies which composite actions (i.e., set of simultaneous actions) may occur. 
We give hereafter an intuitive example that abstracts from the underlying algebra of each agent.

\paragraph{Agent} 
A robot and a field are two examples of an agent that specifies a component as a rewriting theory.
The dynamics of both agents is captured by a rewrite rule of the form:
\[
      (s,\emptyset)\Rightarrow (s',\acts)
\]
where $s$ and $s'$ are state terms, and $\acts$ is a set of actions that the field or the robot proposes as alternatives.
Given an action $a\in \acts$ from the set of possibilities, a function $\phi$ updates the state $s$ and returns a new state $\phi(s',a)$.
The equational theory that specifies $\phi$ may capture both discrete and continuous changes.
The robot and the field run concurrently in a system, where their actions may interact.

\begin{example}[Battery]
  A battery is characterized by a set of internal physical laws that describe the evolution of its energy profile over time under external stimulations. We consider three external stimuli for the battery as three events: a charge, a discharge, and a read event. Each of those events may change the profile of the battery, and we assume that in between two events, the battery energy follows some fixed internal laws.\\
  Formally, we model the energy profile of a battery as a function $f : \Rp \to [0,100\%]$ where $f(t) = 50\%$ means that the charge of the battery at time $t$ is of $50\%$. 
  In general, $f$ may be arbitrarily complex, and captures the response of event occurrences (e.g., charge, discharge, read) and passage of time coherently with the underlying laws (e.g., differential equation). For instance, a charge (or discharge) event at a time $t$ coincides with a change of slope in the function $f$ after time $t$ and before the next event occurrence.\\
  For simplicity, we consider a battery for which $f$ is piecewise linear in between any two events.
  The slope changes according to some internal laws at points where the battery is used for charge or discharge.\\
  In our model, a battery interacts with its environment only at discrete time points.
Therefore, we model the observables of a battery as a function $l : \N \to [0,100\%]$ that intuitively samples the state of the battery at some monotonically increasing and non-Zeno sequence of timestamp values.
We capture, in Definition~\ref{def:component}, the continuous profile of a battery as a component whose behavior contains all of such increasing and non-Zeno sampling sequences for all continuous functions $f$.
\end{example}
\begin{example}[Robot]
    A robot's state contains the previously read values of its sensors.
    Based on its state, a robot decides to move in some specific direction or read its sensors.\\
    Similarly to the battery, we assume that a robot acts periodically at some discrete points in time, such as the sequence $\move(E)$ (i.e., moving East) at time $0$, $\rread((x,y),l)$ (i.e., reading the position $(x,y)$ and the battery level $l$) at time $T$, $\move(W)$ (i.e., moving West) at time $3T$ while doing nothing at time $2T$, etc. The action may have as effect to change the robot's state: typically, the action $\rread((x,y),l)$ updates the state of the robot with the coordinate $(x,y)$ and the battery value $l$.
\end{example}

\paragraph{System}
A system is a set of agents together with a composability constraint $\kappa$ that restricts their updates. For instance, take a system that consists of a robot $\id$ and a field $F$. The concurrent execution of the two agents is given by the following system rewrite rule:
\[
    \{(s_\id,\acts_\id), (s_F, \acts_F)\} \Rightarrow_S  \{(\phi_\id(s_\id,a_\id),\emptyset), (\phi_F(s_F,a_F),\emptyset)\}
\]
where $a_\id \in \acts_\id$ and $a_F \in \acts_F$ are two actions related by $\kappa$.

Each agent is unaware of the other agent's decisions. The system rewrite $\Rightarrow_S$ filters actions that do not comply with the composability relation $\kappa$. As a result, each agent updates its state with the (possibly composite) action chosen at runtime, from the list of its submitted actions.
The framework therefore clearly separates the place where agent's and system's choices are handled, which is a source of runtime analysis.

Already, at this stage, we can ask the following query on the system: will robot $\id$ eventually reach the location $(x,y)$ on the field? Note that the agent alone cannot answer the query, as the answer depends on the characteristics of the field.
\begin{example}[Battery-Robot]
    \label{ex:battery-robot}
    Typically, a move of the robot \emph{synchronizes} with a change of state in the battery, and a read of the robot occurs at the same time as a sampling of the battery value.\\
    The system behavior therefore consists of sequences of simultaneous events occurring between the battery and the robot. 
    By composition, the battery exposes the subset of its behavior that conforms to the specific frequency of read and move actions of the robot. The openness of the battery therefore is reflected by its capacity to adapt to any observation frequency.
\end{example}

\paragraph{Coordination}
Consider now a system with three agents: two robots and a field. Each robot has its own objective (i.e., location to reach) and strategy (i.e., sequence of moves). Since both robots share the same physical field, some exclusion principals apply, e.g., no two robots can be at the same location on the field at the same time.
It is therefore possible that the system deadlocks if no actions are composable, or livelocks if the robots enter an infinite sequence of repeated moves.

We add a protocol agent to the system, which imposes some coordination constraints on the actions performed by robots $\id_1$ and $\id_2$.
Typically, a protocol coordinates robots by forcing them to do some specific actions.
As a result, given a system configuration $\{(s_{\id_1},\acts_{\id_1}), (s_{\id_2},\acts_{\id_2}), (s_F,\acts_F), (s_P, \acts_P)\}$ the run of robots $\id_1$ and $\id_2$ has to agree with the observations of the protocol, and the sequence of actions for each robot will therefore be conform to a permissible sequence under the protocol.

In the case where the two robots enter a livelock and eventually run out of energy, we show in Section~\ref{sec:implementation} the possibility of using a protocol to remove such behavior. 
\begin{example}[Safety property]
  A safety property is typically a set of traces for which nothing bad happens.
  In our framework, we consider only observable behaviors, and a safety property therefore declares that nothing bad \emph{is observable}. However, it is not sufficient for a system to satisfy a safety property to conclude that it is safe: an observation that would make a sequence violate the safety property may be absent, not because it did not actually happen, but merely because the system missed to detect it.\\
  For example, consider a product of a battery component and a robot with a sampling period $T$, as introduced in Example~\ref{ex:battery-robot}. 
  Consider the safety property: \emph{the battery energy is between the energy thresholds $e_1$ and $e_2$}. 
  The resulting system may exhibit observations with energy readings between the two thresholds only, and therefore satisfy the property. However, had the robot used a smaller sampling period $T'=T/2$, which adds a reading observation of its battery between every two observations, we may have been able to detect that the system is not safe because it produces sequences at this finer granularity sampling rate that  violate the safety property.
  We show how to algebraically capture the safety of a system constituted of a battery-robot. 
\end{example}
}

\iftoggle{short}{}{
\subsection{Instances of agents in Maude}
\label{appendix:agent}

    Section~\ref{sec:general-framework} introduces the signature for an agent module, and the rewrite rules for an agent and a system.
    The instance of an agent module provides an equational theory that implements each operation, namely \texttt{computeActions}, \texttt{internalUpdate}, \texttt{getOutput}, and \texttt{getPostState}.
    Each instance comes with an interface, called \texttt{AGENT-INTERFACE} in which the action names for \texttt{AGENT} are constructed.
    For instance, the interface for the robot agent called \texttt{TROLL}  contains the constructors for action names \texttt{move: direction -> Action} and \texttt{read: sensorName -> Action}.
    An agent that interacts with another agent must therefore include the interface module of that agent.
    We also assume that each agent shares the same preference structure, which we call action semiring (written \texttt{ASemiring}).
    The action semiring consists of an action paired with a natural number preference value.
     To illustrate the use of our framework to simulate and verify cyber-physical systems, we present an agent specification for four components: a \texttt{FIELD}, a \texttt{TROLL}, a \texttt{BATTERY}, and a \texttt{PROTOCOL}.

     A \texttt{FIELD} component interacts with the \texttt{TROLL} component by reacting to its move action, and its sensor reading.
     As shown in Listing~\ref{lst:field} the \texttt{FIELD} agent has no actions, but reacts to the move action of the \texttt{TROLL} agent by updating its state and changing the agent's location. 
     Currently, the update is discrete, but more sophisticated updates can be defined (e.g., changing the mode of a function recording the trajectory of the \sort{TROLL} agent).
     In the case where the state of the \sort{FIELD} agent forbids the \sort{TROLL} agent's move, the \sort{FIELD} agent enters in a disallowed state marked as \texttt{notAllowed(an)}, with \texttt{an} as the action name.
     The \sort{FIELD} responds to the read sensor action by returning the current location of the \sort{TROLL} agent as an output.
}

\begin{lstlisting}[caption={Extract from the \texttt{FIELD} Maude module.}\label{lst:field}]
fmod FIELD is
  inc TROLL-INTERFACE .
  inc FIELD-INTERFACE .
  inc PROTOCOL-INTERFACE .
  inc AGENT{ASemiring} .
  ...
  *** Passive agent:
  eq computeActions(id , Field, M) = null .
  
  eq internalUpdate(id , Field, M) = M .
  
  ceq getPostState(r, Field, id, a, mtOutput, M) = M'  
      if isMove?(a) /\
         k(loc) |-> d(id) , M1' := M /\
         loc' := next(loc, a) /\
         loc' =/= loc /\
         M[k(loc')] == undefined /\
         M' := k(loc') |-> d(id), M1' .

  ceq getPostState(r, Field, id, a, mtOutput, M) = notAllowed(a)
      if isMove?(a) /\ k(loc) |-> d(id) , M1' := M /\
         loc' := next( loc , a ) /\ ((loc' =/= loc and  M[k(loc')] =/= undefined) or loc' == loc) .

  ceq getOutput(r, Field, id, readSensors(position sn), M) 
                =  ( k("pos") |-> loc , M'  ) 
      if  k(loc) |-> d(id) , M1' := M /\ 
          M' := ( k("obstacles") |-> obstacle( 1, id, loc, M))   .
endfm
\end{lstlisting}

\iftoggle{short}{}{
     A \texttt{TROLL} agent reacts to no other agent actions, and therefore does not include any agent interface.
     However, the \texttt{TROLL} agent returns a ranked set of actions given its state with the \texttt{computeActions} operation. The expression may contain more than one action, with different weights. The weights of the action may depends on the internal goal that the agent set to itself, as for instance reaching a location on the field.
     The \texttt{TROLL} agent specifies how it reacts to, e.g., the sensor value input from the field, by updating the corresponding key in its state with \texttt{getSensorValues}.
}
\begin{lstlisting}[caption={Extract from the \texttt{TROLL} Maude module.}\label{lst:troll}]
fmod TROLL is
  inc AGENT{ASemiring} .
  inc LOCATION .
  inc TROLL-INTERFACE .

  eq computeActions(id , Troll , M ) = getSoftActions(id, M , trollActions(id, M)) .
  
  ceq internalUpdate(id, Troll, M) = insert(k("read"), nd(1), M) if M[k("read")] == nd(0) .
  ceq internalUpdate(id, Troll, M) = insert(k("read"), nd(0), M) if M[k("read")] == nd(1) .
  
  ceq getPostState(id, Troll, id, readSensors(sn), sensorvalues, M) = M'
      if M' := getSensorValues(getResources(id, readSensors(sn)) , sensorvalues), k("goal") |-> M[k("goal")], k("read") |-> nd(1)  .
endfm
\end{lstlisting}

\iftoggle{short}{}{
     A \texttt{BATTERY} agent does not act on any other agent, as the \texttt{FIELD}, but reacts to the \texttt{TROLL} agent actions.
     Each \texttt{move} action triggers in the \sort{BATTERY} agent a change of state that decreases its energy level. As well, each \texttt{charge} action changes the \sort{BATTERY} agent state to increase its energy level.
     Similarly to the field, in the case where the state of the battery agent has $0$ energy, the battery enters a disallowed state marked as \texttt{notAllowed(an)}, with \texttt{an} as the action name.
     A sensor reading by the \texttt{TROLL} agent triggers an output from the \sort{BATTERY} agent with the current energy level.
}
\begin{lstlisting}[caption={Extract from the \texttt{battery} Maude module.}\label{lst:battery}]
fmod BATTERY is
 inc AGENT{ASemiring} .
 inc BATTERY-INTERFACE . 
 inc TROLL-INTERFACE .


 *** Passive agent:
 eq computeActions(id, Battery, M ) = null .
 eq internalUpdate(id, Battery, M ) = M .


 ceq getOutput(r, Battery, id, readSensors(energy sn), M) 
           =  k("bat") |-> M[k("bat")] 
     if r := getBattery(id) .

 *** Next state.
 ceq getPostState(r, Battery, id, an, mtOutput, M) =  M'
     if isMove?(an) /\
        k("bat") |-> nd(s i) , M1' := M /\
        M' := insert( k("bat") , nd(i) , M) . 

 ceq getPostState(r, Battery, id, charge(j), mtOutput, M)  = M1 
     if nd(i) := M[k("bat")] /\ 
         i  < capacity  /\
         M1 := insert( k("bat") , nd(min ( i + j, capacity)) , M ) .

 ceq getPostState(r, Battery, id, an, mtOutput, M) = notAllowed(an)
      if isMove?(an) /\ M[k("bat")] == nd(0) .

endfm
\end{lstlisting}

A \texttt{PROTOCOL} agent \sort{swap(id1,id2)} acts on the \texttt{TROLL} agents \sort{id1} and \sort{id2}, and is used as a resource by the two \sort{TROLL} agent move action.
A \texttt{PROTOCOL} internally has a finite state machine \sort{T(id):Fsa} that accepts or rejects a sequence of actions. 
Each \texttt{move} action of a \sort{TROLL} is accepted only if there is a transition in the \sort{PROTOCOL} agent state transition system.
A \texttt{PROTOCOL} agent \sort{swap(id1, id2)} always tries to swap agents with ids \sort{id1} and \sort{id2}.
Thus, if \sort{id2} is on the direct East position of \sort{id1} on the field, then action \sort{start} succeeds, and the protocol enters in the sequence \sort{move(N)} for \sort{id2}, \sort{move(W)} for \sort{id2}, \sort{move(E)} for \sort{id1}, and then \sort{move(S)} for \sort{id2}. Eventually the sequence ends with \sort{finish} action.
The \sort{PROTOCOL} agent may also have some transitions labeled with a set of actions, one for each of the agent \sort{id1} and \sort{id2}. 
In which case, the transition succeeds if the clique contains, for each agent involved in the protocol, an action that is composable with the action labeling the protocol transition.
We use the \sort{end} action to mark the end of the sequence of actions forming a clique. The \sort{PROTOCOL} may reject such \sort{end} action if the clique does not cover the set of actions labeling the transition, which therefore discard the set of actions as not composable.
\begin{lstlisting}[caption={Extract from the \texttt{battery} Maude module.}\label{lst:swap}]
fmod SWAP is
    inc AGENT{ASemiring} .
    inc TROLL-INTERFACE .
    inc PROCESS-INTERFACE .
    inc FIELD-INTERFACE .
    inc PROTOCOL-INTERFACE .

    op T : Identifier -> Fsa .
    *** Update of state from external move or its own swapping actions
    ceq getPostState(id, Protocol, id', move(d), sysState, M ) = M'
        if  {q(i)} := getState(M) /\ 
            M' :=  insert( k("recv") , recv(union(getLabel(M), {l(id', move(d))})) , M) .

    *** Ending transition correctly
    ceq getPostState(id , Protocol , id, end , sysState, M) = M'
        if  state := getState(M) /\
            label := getLabel(M) /\
            tr := getTransitions(T(id)) /\
            (state, label, state'), tr' := tr /\
            M' :=  insert( k("recv") , recv({}), insert( k("state") , ds(state') , M)) .

    *** Not allowed states
    eq getPostState(id, Protocol, id', end, sysState, M ) = notAllowed(end) [owise] .
    eq getPostState(id, Protocol, id', a, sysState, M) = M [owise] .

    eq getOutput(id, Protocol, id', a, M) = empty .
    ceq computeActions(swap(id, id') , Protocol, M ) = ((swap(id, id') , ( start ; getResources(swap(id, id'), start))), 5) 
        if {q(0)} := getState(M) .
    eq computeActions(swap(id, id'), Protocol, M) = null [owise] .
    eq internalUpdate(swap(id, id'), Protocol, M) = M .

endfm
\end{lstlisting}

\iftoggle{short}{}{
\paragraph{Composability relation}
The \sort{TROLL}, \sort{FIELD}, and \sort{BATTERY} modules specify the state space and transition functions for, respectively, a \sort{TROLL}, \sort{FIELD}, and \sort{BATTERY} agent. A system consisting of a set of instances of such agents would need a composability relation to relate actions from each agent.\\
More precisely, we give some possible \emph{cliques} of a system consisting of two \sort{TROLL} agents with identifiers  \sort{id(0), id(1):TROLL}, one \sort{field:FIELD} agent, and two \sort{BATTERY} agents \sort{bat(0), bat(1):BATTERY}. \\
The actions of agent \sort{id(0)} compose with outputs of its corresponding battery \sort{bat(0)} and of the shared \sort{field} agent.\\
For instance, a move action of the \sort{id(0)} agent is of the form \sort{(id(0), (move(d), \{bat(0), field\}))}, where \sort{d} is a direction for the move, and composes with outputs of the battery and field, both notifying that the move is possible.\\
Alternatively, a read action of the \sort{id(0)} agent is of the form \sort{(id(0), (read, \{bat(0), field\}))} and composes with outputs of the battery and field, each giving the battery level and the location of agent \sort{id(0)}.

\paragraph{System} 
The agents defined above are instantiated within the same system to study their interactions.
We consider a system containing two \texttt{TROLL} agents, with identifiers \texttt{id(0)} and \texttt{id(1)}, paired with two \sort{BATTERY} agents with identifier \texttt{bat(0)} and \texttt{bat(1)}, and sharing the same \texttt{FIELD} resource. 
The goal for each agent is to reach the initial location of the other agent. If both agents follow the shortest path to their goal location, there is an instant for which the two agents need to swap their positions.
The crossing can lead to a livelock, where agents move symmetrically until the energy of the batteries runs out.%
}
The initial system term, without the protocol, is given by:
  \begin{lstlisting}
eq init = [[id(0): Troll | k("goal") |-> (5 ; 5) ; false ; null]
[bat(0) : Battery | k("bat") |-> nd(capacity) ; false ; null ]
[id(1): Troll | k("goal") |-> (0 ; 5) ; false ; null]
[bat(1) : Battery | k("bat") |-> nd(capacity) ; false ; null ]
[field : Field | (k(( 0 ; 5 )) |-> d(id(0)) , k(( 5 ; 5 )) |-> d(id(1))) ; false ; null]] .
  \end{lstlisting}

  The initial system term with the protocol is given by:

\begin{lstlisting}
eq init = [[id(0): Troll | k("goal") |-> (5 ; 5) ; false ; null]
[bat(0) : Battery | k("bat") |-> nd(capacity) ; false ; null ]
[id(1): Troll | k("goal") |-> (0 ; 5) ; false ; null]
[bat(1) : Battery | k("bat") |-> nd(capacity) ; false ; null ]
[swap(id(0),id(1)) : Protocol | k("state") |-> ds({q(0)}), k("recv") |-> recv({}) ; false ; null]
[field : Field | (k(( 0 ; 5 )) |-> d(id(0)) , k(( 5 ; 5 )) |-> d(id(1))) ; false ; null]] .
\end{lstlisting}
\end{document}